\newtheorem{claim}{Claim}
\newtheorem{proposition}{Proposition}
\DeclareMathOperator*{\argmin}{arg\,min}
\DeclareMathOperator*{\argmax}{arg\,max}
\newcommand{\eqdef}{\mathrel{\mathop:}=}
\DeclareMathOperator{\MRCE}{\mathcal{MRCE}}
\title{Maximum Rooted Connected Expansion}
\titlerunning{Maximum Rooted Connected Expansion} 
\author{Ioannis Lamprou}{Department of Computer Science, University of Liverpool, Liverpool, UK}{lamprou@liverpool.ac.uk}{}{}
\author{Russell Martin}{Department of Computer Science, University of Liverpool, Liverpool, UK}{ramartin@liverpool.ac.uk}{}{}
\author{Sven Schewe}{Department of Computer Science, University of Liverpool, Liverpool, UK}{svens@liverpool.ac.uk}{}{}
\author{Ioannis Sigalas}{Department of Informatics \& Telecommunications, University of Athens, Athens, Greece}{sigalasi@di.uoa.gr}{}{}
\author{Vassilis Zissimopoulos}{Department of Informatics \& Telecommunications, University of Athens, Athens, Greece}{vassilis@di.uoa.gr}{}{}
\authorrunning{I. Lamprou, R. Martin, S. Schewe, I. Sigalas and V. Zissimopoulos} 
\subjclass{F.2.2 Nonnumerical Algorithms and Problems}
\keywords{prefetching, domination, expansion, ratio}
\begin{document}
	
	\maketitle
	
	\begin{abstract}
		\emph{Prefetching} constitutes a valuable tool toward the goal of efficient Web surfing.
		As a result, estimating the amount of resources that need to be preloaded during a surfer's browsing becomes an important task.
		In this regard, prefetching can be modeled as a two-player combinatorial game [Fomin et al.,\ \emph{Theoretical Computer Science 2014}], where a surfer and a marker alternately play on a given graph (representing the Web graph). During its turn, the marker chooses a set of $k$ nodes to mark (prefetch), whereas the surfer, represented as a token resting on graph nodes, moves to a neighboring node (Web resource). 
		The surfer's objective is to reach an unmarked node before all nodes become marked and the marker wins.
		Intuitively, since the surfer is step-by-step traversing a subset of nodes in the Web graph, a satisfactory prefetching procedure would load in cache (without any delay) all resources lying in the neighborhood of this growing subset. 
		
		Motivated by the above, we consider the following maximization problem to which we refer to as the \emph{Maximum Rooted Connected Expansion} (MRCE) problem. 
		Given a graph $G$ and a root node $v_0$, we wish to find a subset of vertices $S$ such that $S$ is connected, $S$ contains $v_0$ and the ratio $\frac{|N[S]|}{|S|}$ is maximized, where $N[S]$ denotes the \emph{closed neighborhood} of $S$, that is, $N[S]$ contains all nodes in $S$ and all nodes with at least one neighbor in $S$.
		
		We prove that the problem is NP-hard even when the input graph $G$ is restricted to be a split graph. 
		On the positive side, we demonstrate a polynomial time approximation scheme for split graphs.
		Furthermore, we present a $\frac{1}{6}(1-\frac{1}{e})$-approximation algorithm for general graphs based on techniques for the \emph{Budgeted Connected Domination} problem [Khuller et al.,\ \emph{SODA 2014}].
		Finally, we provide a polynomial-time algorithm for the special case of interval graphs.
		Our algorithm returns an optimal solution for MRCE in $\mathcal{O}(n^3)$ time, where $n$ is the number of nodes in $G$.
		
	\end{abstract}
	
	\section{Introduction} 
	
	In the evergrowing World Wide Web landscape, browsers compete against each other to offer the best quality of surfing to their users.
	A key characteristic in terms of quality is the speed attained when retrieving a new page or, in general, resource.
	Thus, a browser's objective is to minimize latency when moving from one resource to another.
	One way to achieve this goal is via \emph{prefetching}: when the user lies at a certain Web node, predict what links she is more likely to visit next and preload them in cache so that, when the user selects to visit one of them, the transition appears to be instantaneous.
	Indeed, the World Wide Web Consortium (W3C) provides standards for prefetching in HTML \cite{W3C}.
	Also, besides being nowadays a common practice for popular browsers, prefetching constitutes an intriguing research theme, e.g., see the surveys in \cite{Wang, Ali} for further references.
	
	However, prefetching may come with a high network load cost if employed at a large scale.
	In other words, there is a trade-off that needs to be highlighted: more prefetching may mean less speed and even delays.
	For this reason, it becomes essential to acquire knowledge about the maximum number of resources to be prefetched over any potential Web nodes a surfer may visit.
	In this respect, Fomin et al.\ \cite{Prefetching} define the \emph{Surveillance Game} as a model for worst-case prefetching.
	The game is played by two players, namely the \emph{surfer} and the \emph{marker}, on a (directed) graph $G$ representing (some view of) the Web graph.
	The surfer controls a token initially lying at a designated pre-marked start node $v_0$.
	In each round, the marker marks, i.e., prefetches, up to $k$ so-far unmarked nodes during her turn and then the surfer chooses to move her token at a neighboring node of its current position.
	Notice that, once marked, a node always remains marked thereafter.
	The surfer wins if she arrives at an unmarked node, otherwise the marker wins if she manages to mark the whole graph before such an event occurs. 
	In optimization terms, the quantity under consideration is the \emph{surveillance number}, denoted $sn(G, v_0)$ for a graph $G$ and a start (root) node $v_0$, which is the minimum number of marks the marker needs to use per round in order to ensure that a surfer walking on $G$ (starting from $v_0$) never reaches an unmarked node.
	
	A main observation regarding the above game is that the surfer follows some connected trajectory on the graph $G$.
	Let $S$ stand for the set of nodes included in this trajectory.
	The marker's objective is to ensure that all nodes in $S$ or in the neighborhood of $S$ get marked promptly.
	Let $N[S]$ stand for the \emph{closed neighborhood} of $S$, i.e., $N[S]$ includes all nodes in $S$ and all nodes with at least one neighbor in $S$.
	Fomin et al.\ prove (Theorem 20 \cite{Prefetching}) that, for any graph $G$ and root $v_0$, it holds $sn(G, v_0) \ge \max \lceil \frac{|N[S]|-1}{|S|}\rceil$, where the maximum is taken over all subsets $S$ that induce a connected subgraph of $G$ containing $v_0$. Moreover, equality holds in case $G$ is a tree.
	That is, a ratio of the form $|N[S]|/|S|$ (minus one and ceiling operator removed for clarity) provides a good lower bound and possibly in many occasions a good prediction on the prefetching load necessary to satisfy an impatient Web surfer.
	Hence, in this paper, we believe it is worth to independently study the problem of determining $\max \frac{|N[S]|}{|S|}$ where the maximum is taken over all subsets $S$ inducing a connected subgraph of $G$ containing $v_0$.
	We refer to this problem as the \emph{Maximum Rooted Connected Expansion} problem (shortly MRCE) since we seek to find a connected set $S$ (containing the root $v_0$) maximizing its \emph{expansion ratio} in the form of $|N[S]|/|S|$.
	
	Except for the prefetching motivation, such a problem can stand alone as an extension to the well-studied family of domination problems. 
	Indeed, we later use connections between our problem and a domination variant in \cite{BCDS} to prove certain results.
	Finally, notice that removing the root node requirement makes the problem trivial.
	Let $\Delta$ stand for the maximum degree of a given graph $G$.
	Then, a solution consisting of a single max-degree node gives a ratio of $\Delta+1$.
	In addition, the ratio is at most $\Delta+1$, since given any connected set $S$ consisting of $k$ nodes, $|N[S]| \le (\Delta + 1)k$ due to the fact that each node can contribute at most $\Delta+1$ new neighbors (including itself).
	
	\subparagraph{Related Work.}
	The Surveillance Game was introduced in \cite{Prefetching}, where it was shown that computing $sn(G, v_0)$ is NP-hard in split graphs, nonetheless, it can be computed in polynomial time in trees and interval graphs. 
	Furthermore, in the case of trees, the MRCE ratio is proved \cite{Prefetching} to be equal to $sn(G, v_0)$ and therefore can be computed in polynomial time.
	In \cite{Connected}, the connected variant of the problem is considered, i.e., when the set of marked nodes is required to be connected after each round.
	For the corresponding optimization objective, namely the \emph{connected surveillance number} denoted $csn(G, v_0)$, it holds $csn(G,v_0) \le \sqrt{sn(G,v_0)n}$ for any $n$-node graph $G$.
	The more natural online version of the problem is also considered and (unfortunately) a competitive ratio of $\Omega(\Delta)$ is shown to be the best possible.
	
	A problem closely related to ours (as demonstrated later in Section~\ref{sec:general}) is the \emph{Budgeted Connected Dominating Set} problem (shortly BCDS), where, given a budget of $k$, one must choose a connected subset of $k$ nodes with a maximum size of closed neighborhood.
	This problem is shown to have a $(1-1/e)/13$-approximation algorithm (in general graphs) in \cite{BCDS}.
	
	Regarding problems dealing with some ratio of quantities, we are familiar with the \emph{isoperimetric number} problem \cite{Golovach}, where the objective is to \emph{minimize} $|\partial X|/|X|$ over all node-subsets $X$, where $\partial X$ denotes the set of edges with exactly one endpoint in $X$. 
	\emph{Vertex-isoperimetric} variants also exist; see for example \cite{Harper, Bezrukov}.
	Up to our knowledge, a ratio similar to the MRCE ratio we currently examine has not been considered.
	
	\subparagraph{Our Results.}
	We initiate the study for MRCE. 
	We prove that the decision version of MRCE is NP-complete, even when the given graph $G$ is restricted to be a split graph.
	For the same case, we demonstrate a polynomial-time approximation scheme running in $\mathcal{O}(n^{k+1})$ time with a constant-factor $\frac{k}{k+2}$ guarantee, for any fixed integer $k > 0$.
	Our algorithm exploits a growth property for MRCE and the special topology of split graphs.
	Moving on, we provide another algorithm for general graphs, i.e., when no assumption is made on the topology of the given graph besides it being connected. The algorithm is inspired by an approximation algorithm for BCDS \cite{BCDS} and achieves an approximation guarantee of $(1-1/e)/6$.
	Finally, we show that in the case of interval graphs, the MRCE ratio can be computed optimally in $\mathcal{O}(n^3)$ time for any given $n$-node graph.
	
	\subparagraph{Outline.}
	In Section~\ref{sec:prel}, we first define some necessary preliminary graph-theoretic notions and then formally define the MRCE problem.
	In Section~\ref{sec:split}, we present our results for split graphs.
	Later, in Section~\ref{sec:general}, we give the approximation algorithm for general graphs.
	Next, in Section~\ref{sec:interval}, we demonstrate the polynomial-time algorithm for interval graphs.
	Finally, in Section~\ref{sec:conclusion} we cite some concluding remarks and further work directions.
	
	\section{Preliminaries}\label{sec:prel}
	
	A graph $G$ is denoted as a pair $(V(G), E(G))$ of the nodes and edges of $G$.
	The graphs considered are simple (neither loops nor multi-edges are allowed), connected and undirected.
	
	Two nodes connected by an edge are called \emph{adjacent} or \emph{neighboring}.
	The \emph{open neighborhood} of a node $v \in V(G)$ is defined as $N(v) = \{u \in V(G): \{v, u\} \in E(G)\}$, while the \emph{closed neighborhood} is defined as $N[v] = \{v\} \cup N(v)$.
	For a subset of nodes $S \subseteq V(G)$, we expand the definitions of open and closed neighborhood as $N(S) = \bigcup_{v \in S} (N(v) \setminus S)$ and $N[S] = N(S) \cup S$.
	
	The degree of a node $v \in V(G)$ is defined as $d(v) = |N(v)|$.
	The minimum (resp.\ maximum) degree of $G$ is denoted by $\delta(G) = \min_{v \in V(G)} d(v)$ (resp.\ $\Delta(G) = \max_{v \in V(G)} d(v)$).
	
	A \emph{clique} is a set of nodes, where there exists an edge between each pair of them.
	The maximum size of a clique in $G$, i.e., the \emph{clique number} of $G$, is denoted by $\omega(G)$. 
	
	An \emph{independent set} is a set of nodes, where there exists no edge between any pair of them.
	The max.\ size of such a set in $G$, i.e., the \emph{independence number} of $G$, is denoted by $\alpha(G)$. 
	
	In the results to follow, we consider two specific families of graphs, namely \emph{split} and \emph{interval} graphs.
	Any necessary preliminary knowledge for these two graph families is given more formally in their corresponding sections.
	
	Finally, let us provide a formal definition of the quantity under consideration and the decision version of the corresponding optimization problem.
	
	\begin{definition}
		We define the Maximum Rooted Connected Expansion number for a graph $G$ and a node $v_0$  as follows, where $Con(G, v_0) \eqdef \{S \subseteq V(G) \;|\; v_0 \in S \text{ and } S \text{ is connected}\}$:
		$$\emph{MRCE}(G, v_0) = \max_{S\in Con(G, v_0)} \frac{|N[S]|}{|S|}$$
		
	\end{definition}
	
	\begin{definition}[$\mathcal{MRCE}$]
		Given a graph $G$, a node $v_0 \in V(G)$ and two natural numbers $a, b$, decide whether $\emph{MRCE}(G, v_0) \ge a/b$.
	\end{definition}
	
	When the input graph is known to be split, respectively interval, we refer to the corresponding optimization problem as $Split \MRCE$, respectively $Interval \MRCE$. 
	
	\section{Split Graphs}\label{sec:split}
	In this section, we define split graphs and cite a useful preliminary result regarding their structure.
	We proceed with our results and prove that $Split \MRCE$ is NP-hard, but it can be approximated within a constant factor of $\frac{k}{k+2}$ for any fixed integer $k > 0$.

	\begin{definition}
		A graph is split if it can be partitioned into a clique and an independent set.
	\end{definition}	
	
	Given the above definition, we denote by $(I,C)$ a partition for a split graph $G$ where $I$ stands for the independent set and $C$ for the clique.
	However, there may be many different ways to partition a split graph into an independent set and a clique \cite{Golumbic}.
	
	\begin{theorem}[Follows from Theorem 3.1 \cite{Feder}]\label{thm:partitions}
		A split graph has at most a polynomial number of partitions into a clique and an independent set.
		Furthermore, all these partitions can be found in polynomial time.
	\end{theorem}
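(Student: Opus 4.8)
The plan is to prove the statement directly, without unpacking the full list-partition machinery behind Feder's Theorem 3.1, since for the special case of a clique/independent-set partition a short self-contained argument suffices. First I would observe that a split graph admits at least one such partition by definition, and that one reference partition $(C_0, I_0)$ can be produced in polynomial time (split graphs are recognizable in polynomial, indeed linear, time and a valid partition is obtained as a byproduct). Fix this reference partition, with $C_0$ the clique and $I_0$ the independent set, and write $n = |V(G)|$.

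The heart of the argument is a structural lemma that bounds how far any other partition can deviate from $(C_0, I_0)$. Let $(C, I)$ be an arbitrary partition of $G$ into a clique $C$ and an independent set $I$. I claim that $|C_0 \setminus C| \le 1$ and $|C \setminus C_0| \le 1$. Indeed, every vertex of $C_0 \setminus C$ lies in $I$, and any two such vertices are adjacent because they both belong to the clique $C_0$; but two adjacent vertices cannot both lie in the independent set $I$, so $|C_0 \setminus C| \le 1$. Symmetrically, every vertex of $C \setminus C_0$ lies in $I_0$, so any two of them are non-adjacent, yet both would belong to the clique $C$ and hence be adjacent; thus $|C \setminus C_0| \le 1$. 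Consequently every admissible clique has the form $C = (C_0 \setminus \{a\}) \cup \{b\}$ for some $a \in C_0$ (or no deletion) and some $b \in I_0$ (or no insertion), and $I = V(G) \setminus C$ is then forced.

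It follows at once that the number of partitions is at most $(|C_0|+1)(|I_0|+1) = \mathcal{O}(n^2)$, which is polynomial. For the algorithmic claim I would enumerate all $\mathcal{O}(n^2)$ candidate cliques of the above form, and for each candidate test in $\mathcal{O}(n^2)$ time whether $C$ induces a clique and $V(G) \setminus C$ induces an independent set, retaining exactly those that pass; the total running time is clearly polynomial. The only point demanding care is the structural lemma, where the clique/independent-set dichotomy is invoked twice, once to forbid two simultaneous deletions from $C_0$ and once to forbid two simultaneous insertions. Getting both directions right, together with the degenerate ``delete nothing / insert nothing'' cases, is the crux; everything afterward is routine counting and verification.
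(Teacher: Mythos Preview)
Your argument is correct. The structural lemma---that any two split partitions differ by at most one vertex moved into the clique and at most one moved out---is exactly the right observation, and both directions of your proof are sound; the $\mathcal{O}(n^2)$ bound on the number of partitions and the polynomial enumeration follow immediately.

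As for the comparison: the paper does not actually supply its own proof of this theorem. It simply records the statement as a consequence of Theorem~3.1 in Feder, Hell, Klein and Motwani, which treats the far more general \emph{list partition} framework (partitioning vertices into parts subject to an arbitrary pattern of required/forbidden adjacencies between parts). Your route is therefore genuinely different and more elementary: you bypass the list-partition machinery entirely and give a direct two-line pigeonhole argument tailored to the clique/independent-set dichotomy. What the citation buys is generality (the same theorem covers many partition problems at once); what your approach buys is self-containment and a sharper, explicit $\mathcal{O}(n^2)$ bound on the number of partitions rather than an unspecified polynomial.
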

	
	\subsection{Hardness}
	
	We now move onward to investigate the complexity of $Split \MRCE$.
	Initially, let us a define a pair of satisfiability problems we rely on in order to prove NP-hardness.
	
	\begin{definition}[$3$-$\mathcal{SAT}$]
		Given a CNF formula $\phi$ with $n$ variables and $m$ clauses, where each clause is a disjunction of exactly 3 literals, decide whether $\phi$ is satisfiable.
	\end{definition}
	
	\begin{definition}[$3$-$\mathcal{SAT}_{equal}$]
		Given a CNF formula $\phi$ with $n$ variables and $n$ clauses, where each clause is a disjunction of exactly 3 literals, decide whether $\phi$ is satisfiable.
	\end{definition}
	
	To demonstrate the hardness result in a more presentable way, we employ an auxiliary reduction from $3$-$\mathcal{SAT}$ to $3$-$\mathcal{SAT}_{equal}$ and then a reduction from $3$-$\mathcal{SAT}_{equal}$ to $Split \MRCE$.
	
	We recall that $3$-$\mathcal{SAT}$ is well-known to be NP-hard, e.g. see \cite{Garey}.
	\begin{lemma}\label{lem:3-sat-eq}
		$3$-$\mathcal{SAT}_{equal}$ is NP-hard.
	\end{lemma}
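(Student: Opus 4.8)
The plan is to reduce from ordinary $3$-$\mathcal{SAT}$, which is NP-hard, by \emph{padding} an arbitrary instance so that the number of clauses equals the number of variables while preserving satisfiability. Given a CNF formula $\phi$ with $n$ variables $x_1,\dots,x_n$ and $m$ clauses, each a disjunction of exactly three literals, I would compare $n$ and $m$ and split into three cases. If $n=m$, then $\phi$ is already an instance of $3$-$\mathcal{SAT}_{equal}$ and the reduction is the identity.

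If $m>n$, I would introduce $m-n$ fresh variables $y_1,\dots,y_{m-n}$ that occur in no clause and leave the $m$ clauses untouched, declaring the variable set of the new formula $\phi'$ to contain these dummies. Then $\phi'$ has $m$ variables and $m$ clauses, every clause still consists of exactly three literals, and since the new variables appear in no clause they can be assigned arbitrarily; hence $\phi'$ is satisfiable if and only if $\phi$ is. If instead $m<n$, I would append $n-m$ tautological clauses of the form $(x_1 \vee \neg x_1 \vee x_2)$, each built from existing variables so that no new variable is introduced. Every such clause contains a literal together with its negation and is therefore satisfied by every truth assignment, so appending them cannot affect satisfiability; the resulting $\phi'$ has $n$ variables and $n$ clauses, each still a disjunction of exactly three literals.

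In all three cases the transformation runs in time linear in the size of $\phi$ and outputs a valid $3$-$\mathcal{SAT}_{equal}$ instance that is equisatisfiable with $\phi$, which yields the hardness claim. Since $3$-$\mathcal{SAT}_{equal}$ is clearly in $\NP$, this also gives NP-completeness of the decision version if needed later.

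I do not expect a genuine obstacle here, only two points requiring care. First, one must check that the ``exactly three literals per clause'' condition survives the padding; this is guaranteed because existing clauses are never modified and the inserted tautologies have exactly three literals. Second, one must dispose of the degenerate small cases, in particular when $\phi$ has fewer than two variables (so that the clause $(x_1 \vee \neg x_1 \vee x_2)$ cannot be formed), or when the intended reading of ``exactly three literals'' forbids repeated or complementary literals; there I would either first introduce a single dummy variable or, more simply, observe that such instances have constant size and can be decided directly and mapped to a fixed yes- or no-instance of $3$-$\mathcal{SAT}_{equal}$.
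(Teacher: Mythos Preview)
Your padding reduction is correct and is the standard way to prove this lemma; the paper itself omits the proof entirely (the lemma is stated without a proof environment, presumably because the argument is routine), so there is nothing to compare against beyond confirming soundness.

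Two small remarks on the technicalities you already flag. For the case $m>n$, declaring unused variables is harmless under the paper's definition (``a CNF formula $\phi$ with $n$ variables and $n$ clauses'' does not require every variable to occur), but if you prefer to avoid this you can instead add, for each fresh $y_i$, both the variable and a clause $(y_i \vee x_1 \vee \neg x_1)$; this increments variables and clauses simultaneously and so preserves the gap, after which you apply the $m<n$ branch with the tautology trick---or simply combine the two paddings into a single target size $N=\max(n,m)$. For the case $m<n$, the tautological clause $(x_1 \vee \neg x_1 \vee x_2)$ is fine under the definition used here (``exactly three literals'' with no prohibition on complementary pairs); your fallback of handling constant-size instances directly also works. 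None of this affects the validity of the reduction.
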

	
	\subparagraph{The Reduction.}
	Given a $3$-$\mathcal{SAT}_{equal}$ formula $\phi$, we create a graph $G$ with a node $v_0 \in V(G)$.
	Let $x_1, x_2, \ldots, x_n$ stand for the variables of $\phi$ and $c_1, c_2, \ldots, c_n$ for the clauses of $\phi$.
	We construct the graph $G$ in the following way:
	we place a node $v_0$, one node per literal $x_i, \overline{x_i}$ ($2n$ nodes in total), 
	one node per clause $c_i$ ($n$ nodes in total) and a set of $3n + 2$ "leaf" nodes for each variable (namely $y_{ij}$ for $j = 1, \ldots, 3n+2$) summing up to $(3n+2) \cdot n = 3n^2 + 2n$ "leaf" nodes in total.
	We call the two nodes $x_i, \overline{x_i}$ a \emph{literal-pair} and each node $c_i$ a \emph{clause-node}.
	Then, we connect $v_0$ to each literal node and each literal node to \emph{all} the other literal nodes. Moreover, each literal-node is connected to all the corresponding clause-nodes where it appears in $\phi$. 
	Finally, $x_i$ and $\overline{x_i}$ are connected to $y_{ij}$ for all $j$.
	It is clear that the construction can be done in polynomial time.
	Formally, $V(G) = \{v_0\}  \cup \{x_i, \overline{x_i}: 1 \le i \le n\} \cup \{c_{i}: 1 \le i \le n\} \cup \{y_{ij}: 1 \le i \le n, 1 \le j \le 3n+2 \}$
	and
	\begin{equation*}
	\resizebox*{0.95\linewidth}{!}{$	
		\begin{split}
		E(G) = &\; \{[v_0, x_i]: 1 \le i \le n\} \cup \{[v_0, \overline{x_i}]: 1 \le i \le n\} \cup\\
		&\cup \{[x_i, x_j]: 1 \le i, j \le n, i \neq j\} \cup \{[\overline{x_i}, x_j]: 1 \le i, j \le n, i \neq j\} \cup \{[\overline{x_i}, \overline{x_j}]: 1 \le i, j \le n, i \neq j\} \cup \\
		&\cup \{[x_i, y_{ij}]: 1 \le i \le n, 1 \le j \le 3n+2\} \cup \{[\overline{x_i}, y_{ij}]: 1 \le i \le n, 1 \le j \le 3n+2\} \cup \\
		&\cup \{[x_i, c_{j}]: x_i \text{ in clause } c_{j}\}
		\end{split}
		$}
	\end{equation*}
	That is, we get $|V(G)| = 1 + 5n + 3n^2$ and $|E(G)| = 2n + \binom{2n}{2} + 2n(3n+2) + 3n = 8n^2 + 8n$.
	Figure \ref{fig:red} demonstrates an example of such a construction;
	the literal-nodes within the dashed ellipsis form a clique.
	
	\begin{figure}
		\centering
		\includegraphics{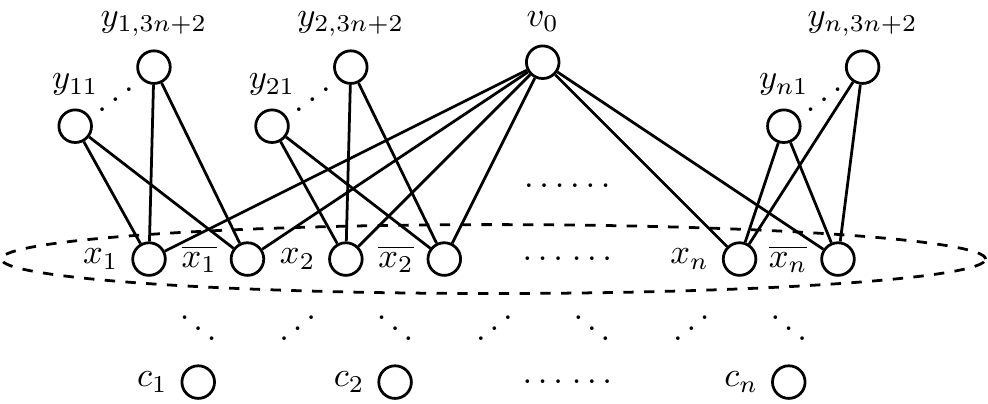}
		\caption{The graph $G$ constructed for the reduction}
		\label{fig:red} 
	\end{figure}
	
	\begin{proposition}
		$G$ is a split graph.
	\end{proposition}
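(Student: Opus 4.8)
The plan is to exhibit an explicit partition of $V(G)$ into a clique $C$ and an independent set $I$, which is exactly what the definition of a split graph demands. I would take the clique to consist of the $2n$ literal nodes, $C = \{x_i, \overline{x_i} : 1 \le i \le n\}$, and let $I$ collect all remaining vertices, namely $I = \{v_0\} \cup \{c_i : 1 \le i \le n\} \cup \{y_{ij} : 1 \le i \le n,\ 1 \le j \le 3n+2\}$. It then suffices to verify the two defining properties separately, after which $(I,C)$ is a valid split partition.

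To see that $C$ is a clique, I would appeal directly to the construction: by design every literal node is joined to every other literal node, so that (as indicated by the dashed ellipsis in Figure~\ref{fig:red}) the induced subgraph on the literal nodes is complete. For the independent set, I would argue the stronger statement that every vertex of $I$ has \emph{all} of its neighbours inside $C$, from which $I$ spanning no edge is immediate. Concretely, inspecting $E(G)$ term by term: the only edges incident to $v_0$ run to literal nodes; the only edges incident to a clause-node $c_j$ run to the literal nodes appearing in $c_j$; and the only edges incident to a leaf-node $y_{ij}$ run to $x_i$ and $\overline{x_i}$. Hence no edge has both endpoints among $\{v_0\} \cup \{c_i\} \cup \{y_{ij}\} = I$, so $I$ is independent and $G$ is split.

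The argument is purely structural and I do not anticipate a serious difficulty; the single point that demands care is confirming that the literal nodes genuinely induce a \emph{complete} graph — in particular that each complementary pair $x_i,\overline{x_i}$ is adjacent — so that $C$ is a true clique rather than merely a near-clique. This is the one place where the edge set $E(G)$ must be read against the intended construction. Once that adjacency is pinned down, the remaining verification that $I$ is independent is routine, precisely because all neighbours of $v_0$, of the clause-nodes, and of the leaf-nodes have already been shown to lie in $C$.
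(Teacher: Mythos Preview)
Your proposal is correct and takes exactly the same approach as the paper: the paper's proof is the single sentence that the literal nodes $x_1, \overline{x_1}, \ldots, x_n, \overline{x_n}$ form a clique and all remaining nodes form an independent set, which is precisely your partition $(I,C)$. Your caution about the adjacency of complementary pairs $x_i,\overline{x_i}$ is well placed, since the formal edge set as written lists $\{[\overline{x_i}, x_j]: i\neq j\}$ and so technically omits those edges; however, the prose description (``each literal node to all the other literal nodes''), the edge count $\binom{2n}{2}$, and Figure~\ref{fig:red} all make clear that the intended construction has the full clique on the $2n$ literal nodes, and the paper's own one-line proof relies on this reading.
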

	\begin{proof}
		$x_1, \overline{x_1}, x_2, \overline{x_2}, \ldots, x_n, \overline{x_n}$ form a clique; all other nodes form an independent set.
	\end{proof}
	
	\begin{claim}\label{claim:=>}
		If $\phi$ is satisfiable, then $\emph{MRCE}(G, v_0) \ge \frac{1+5n+3n^2}{1+n}$ .
	\end{claim}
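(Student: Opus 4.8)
The plan is to exhibit a single connected set $S$ containing $v_0$ whose expansion ratio already meets the claimed bound; by the definition of $\emph{MRCE}(G, v_0)$ as a maximum over $Con(G, v_0)$, producing one such $S$ suffices. The natural candidate is driven by a satisfying assignment of $\phi$: fix any truth assignment satisfying $\phi$, and for each variable $x_i$ let $\ell_i$ denote the literal node ($x_i$ or $\overline{x_i}$) that is set to true. I would then take
$$ S = \{v_0\} \cup \{\ell_1, \ell_2, \ldots, \ell_n\}. $$
Since the $2n$ literal nodes form a clique and $v_0$ is adjacent to every literal node, $S$ induces a connected subgraph, and it contains $v_0$; hence $S \in Con(G, v_0)$. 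Moreover $|S| = n+1$, which already matches the denominator in the target ratio.

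The core of the argument is to show that $N[S] = V(G)$, i.e.\ that this economical choice of one true literal per variable dominates every node. I would check each vertex class in turn. The node $v_0$ and the $n$ chosen literals $\ell_i$ lie in $S$ itself. Every remaining (false) literal node is adjacent to $v_0$, so all $2n$ literal nodes are in $N[S]$. For the leaves, each $\ell_i$ is adjacent to $y_{ij}$ for all $j = 1, \ldots, 3n+2$ (both $x_i$ and $\overline{x_i}$ are joined to every $y_{ij}$), so the full set of $3n^2+2n$ leaf nodes is covered. Finally, for the clause nodes, the key point is that the assignment satisfies $\phi$: every clause $c_j$ contains some literal evaluated to true, whose literal node is one of the $\ell_i \in S$ and is adjacent to $c_j$ by construction, so all $n$ clause nodes lie in $N[S]$.

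Counting the vertices just listed gives $|N[S]| = 1 + 2n + (3n^2+2n) + n = 1 + 5n + 3n^2 = |V(G)|$, so the expansion ratio of $S$ is exactly $\frac{1+5n+3n^2}{n+1}$, and the claimed lower bound on $\emph{MRCE}(G, v_0)$ follows. I do not expect any serious obstacle: the gadget was engineered so that the $3n+2$ leaves per variable force a solution to ``pay'' one vertex per variable while each true literal contributes $3n+2$ fresh neighbours, and the single content-bearing step is the clause-coverage argument, which is precisely where satisfiability of $\phi$ is invoked. The only thing to watch is the bookkeeping that the four vertex classes ($v_0$, literals, leaves, clauses) are pairwise disjoint and exhaust $V(G)$, so that the count $1+5n+3n^2$ is exact.
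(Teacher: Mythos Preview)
Your proposal is correct and follows essentially the same approach as the paper: choose $S$ to be $v_0$ together with the $n$ true literal nodes from a satisfying assignment, then verify that $S$ is connected, has size $n+1$, and dominates all of $V(G)$. Your write-up is in fact more thorough than the paper's (you explicitly check connectivity and break the domination check into vertex classes), but the underlying argument is identical.
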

	
	\begin{proof}
		Let $A$ stand for a truth assignment under which $\phi$ is satisfiable.
		Then, to form a feasible solution for MRCE, we choose a set $S$ including $v_0$ and these literal-nodes (either $x_i$ or $\overline{x_i}$) whose corresponding literals are set true under $A$. 
		Therefore, we get $|S| = 1+n$.
		Since, in $\phi$, each clause is satisfied by at least one literal set true under $A$, each clause-node $c_i$ is connected to at least one literal-node in $S$.
		Moreover, any node $y_{ij}$ is connected to $S$, since exactly one out of $x_i$ and $\overline{x_i}$ is in $S$ (due to $A$ being a truth assignment).
		Overall, we see that $|N[S]| = |V(G)| = 1 + 5n + 3n^2$.
	\end{proof}
	
	\begin{claim}\label{claim:<=}
		If there exists no satisfiable assignment for $\phi$, then $MRCE(G, v_0) < \frac{1+5n+3n^2}{1+n}$.
	\end{claim}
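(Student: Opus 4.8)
The plan is to show that if $\phi$ is unsatisfiable, then \emph{every} connected set $S \ni v_0$ gives ratio strictly below $T \eqdef \frac{1+5n+3n^2}{1+n} = \frac{|V(G)|}{1+n}$. Since Claim~\ref{claim:=>} attains $T$ exactly when $S$ encodes a satisfying assignment, the task is to rule out reaching $T$ by any other means.

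First I would reduce the search space to sets consisting of $v_0$ and literal-nodes only. Since $v_0$ is adjacent to every literal-node, as soon as $v_0 \in S$ all $2n$ literal-nodes already lie in $N[S]$. Every clause-node and every leaf-node is adjacent only to literal-nodes, so for connectivity any clause- or leaf-node placed in $S$ must be adjacent to a literal-node also in $S$; hence that node, together with all of its neighbours, already belongs to $N[S]$. Deleting all clause- and leaf-nodes from $S$ therefore leaves $N[S]$ unchanged while not increasing $|S|$, so it can only increase the ratio. Thus it suffices to consider $S = \{v_0\} \cup L$ with $L$ a set of literal-nodes, which is automatically connected because the literal-nodes form a clique and each is adjacent to $v_0$.

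Next I would parametrise $|N[S]|$ by $\ell = |L|$, the number $p$ of variables \emph{touched} by $L$ (those $i$ with $x_i \in L$ or $\overline{x_i} \in L$), and the number $cov$ of clauses containing some literal of $L$. Counting $v_0$, all $2n$ literal-nodes, the $cov$ covered clause-nodes, and the $3n+2$ leaves of each touched variable gives $|N[S]| = 1 + 2n + cov + p(3n+2)$ with $|S| = 1+\ell$, and the crude bounds $cov \le n$, $p \le n$, $p \le \ell$.

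I would finish by a case analysis on $\ell$ versus $n$. If $\ell > n$, then $|N[S]| \le 1+2n+n+n(3n+2) = |V(G)|$ while $|S| = 1+\ell > 1+n$, so the ratio is below $T$. If $\ell < n$, substituting $cov \le n$ and $p \le \ell$ bounds $|N[S]|$ by $1+3n+(3n+2)\ell$; rewriting its ratio as $(3n+2) - \tfrac{1}{1+\ell}$ and $T$ as $(3n+2) - \tfrac{1}{n+1}$ puts both in the same form, and the desired strict inequality reduces to $\tfrac{1}{1+\ell} > \tfrac{1}{n+1}$, i.e. $\ell < n$. The delicate boundary $\ell = n$ is the real obstacle: here the slack has vanished, the crude bound yields exactly $|N[S]| \le |V(G)| = T(1+n)$, and counting alone cannot produce a strict inequality. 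This is precisely where unsatisfiability must be invoked. Attaining $|N[S]| = |V(G)|$ with $\ell = n$ would force both $p = n$ and $cov = n$; but $p = n$ with only $n$ literals means $L$ contains exactly one of $x_i, \overline{x_i}$ for each $i$, i.e. $L$ encodes a complete truth assignment, and $cov = n$ means that assignment satisfies every clause of $\phi$ — contradicting unsatisfiability. Hence $|N[S]| < |V(G)|$ strictly, the ratio is below $T$, and the claim follows.
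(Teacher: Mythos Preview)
Your proposal is correct and follows essentially the same line as the paper's proof: both first reduce to sets of the form $\{v_0\}\cup L$ with $L$ a set of literal-nodes (the paper states this as a separate proposition), then perform a case split on $|L|$ versus $n$, and both pinpoint the case $|L|=n$ as the only place where unsatisfiability of $\phi$ is needed. Your parametrisation via $p$ and $cov$ and the rewriting of the ratio as $(3n+2)-\tfrac{1}{1+\ell}$ are slightly tidier than the paper's derivative computation and its two-subcase split at $\ell=n$, but the argument is structurally the same.
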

	
	\begin{proof}
		Let us first show a proposition to restrict the shape of a feasible MRCE solution.
		Intuitively, adding any $y_{ij}$ or $c_i$ node does not contribute any new neighbors to the ratio.
		\begin{proposition}\label{prop:reduction}
			Adding any $y_{ij}, c_i$ node can only decrease the ratio of a feasible solution.
		\end{proposition}
		
		The above proposition suggests it suffices to upper-bound potential solutions $S$ containing $v_0$ and only literal nodes. 
		Below, let $R = \frac{1+5n+3n^2}{1+n}$.
		To conclude the proof, we show that, if $\phi$ is unsatisfiable, then the ratio we can obtain is strictly less than $R$.
		
		If $S = \{v_0\}$, then the ratio we get is $\frac{|N[\{v_0\}]|}{|\{v_0\}|} = \frac{1 + 2n}{1} < R$ for any $n > 0$.
		
		If $S$ contains $v_0$ and $k$ literal nodes (any $k$ of them), we distinguish three cases.
		\begin{itemize}
			\item \emph{Case $k \le n-1$:} For a fixed $k$, the ratio becomes at most $\frac{1 + 3n + k(3n+2)}{1+k}$, since at most $k$ families of $y$ nodes are in the neighborhood.
			We observe $\partial\left(\frac{1 + 3n + k(3n+2)}{1+k}\right) /\partial k = \frac{1}{(k+1)^2} > 0$ for any $k > 0$.
			Hence, the worst case is $k = n-1$, which yields a ratio $\frac{1 + 3n + (n-1)\cdot(3n+2)}{n} = \frac{3n^2+2n-1}{n} < R$ for any $n > 0$.
			\item \emph{Case $k = n$:} If exactly one node from each literal pair is in $S$ (i.e. $S$ corresponds to a truth assignment), then the ratio becomes at most $\frac{1+3n-1+n(3n+2)}{1+n} < R$, since $\phi$ is unsatisfiable and therefore any truth assignment leaves at least one uncovered clause node.
			On the other hand, if there exists at least one literal-pair where both $x_i$ and $\overline{x_i}$ are not in $S$, then the ratio is at most $\frac{1+3n+(n-1)(3n+2)}{1+n} < R$, since at least one set of $3n+2$ "leaf" nodes are not in $N[S]$.
			\item \emph{Case $k > n$:} The ratio becomes at most $\frac{|(V(G)|}{1+k} = \frac{1+5n+3n^2}{1+k} < \frac{1+5n+3n^2}{1+n} = R$.
		\end{itemize}
	\end{proof}
	
	\begin{theorem}\label{thm:reduction}
		$Split \; \mathcal{MRCE}$ is NP-complete.
	\end{theorem}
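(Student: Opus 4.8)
The plan is to assemble the two already-established claims into the standard two-part argument for $\NP$-completeness: membership in $\NP$ and $\NP$-hardness via the reduction just constructed.

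First I would verify membership in $\NP$. A natural certificate is a candidate subset $S \subseteq V(G)$. Given $S$, one checks in polynomial time that $v_0 \in S$, that the subgraph induced by $S$ is connected (for instance via a breadth-first search), and that the threshold holds by testing the integer inequality $b \cdot |N[S]| \ge a \cdot |S|$, which is equivalent to $\frac{|N[S]|}{|S|} \ge \frac{a}{b}$ since $b$ and $|S|$ are positive; here $|N[S]|$ is read off directly from the adjacency structure. Hence $\mathcal{MRCE}$ restricted to split graphs lies in $\NP$.

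Second I would establish $\NP$-hardness through the reduction already described. Starting from an instance $\phi$ of $3$-$\mathcal{SAT}_{equal}$ with $n$ variables and $n$ clauses, the construction produces in polynomial time a split graph $G$ with distinguished root $v_0$. I would set the target ratio via $a = 1 + 5n + 3n^2$ and $b = 1 + n$, so that $a/b = \frac{1+5n+3n^2}{1+n}$. Claim \ref{claim:=>} gives that if $\phi$ is satisfiable then $\MRCE(G, v_0) \ge a/b$, while Claim \ref{claim:<=} gives that if $\phi$ is unsatisfiable then $\MRCE(G, v_0) < a/b$. Taken together, these two claims yield the biconditional that $\phi$ is satisfiable if and only if $\MRCE(G, v_0) \ge a/b$, i.e.\ if and only if $(G, v_0, a, b)$ is a yes-instance of $\mathcal{MRCE}$. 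This makes the map $\phi \mapsto (G, v_0, a, b)$ a polynomial-time many-one reduction from $3$-$\mathcal{SAT}_{equal}$ to $Split \; \MRCE$, so by Lemma \ref{lem:3-sat-eq} the latter is $\NP$-hard; combined with the membership argument, it is $\NP$-complete.

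I expect no genuine obstacle at this stage, since the substantive work lies entirely in Claims \ref{claim:=>} and \ref{claim:<=} and in Lemma \ref{lem:3-sat-eq}. The only points requiring care are to confirm that $a$ and $b$ are of size polynomial in the instance (which is immediate, as both are polynomials in $n$) and that the chosen threshold $a/b$ is exactly the boundary separating the satisfiable and unsatisfiable cases — which it is, thanks to the \emph{strict} inequality guaranteed by Claim \ref{claim:<=}.
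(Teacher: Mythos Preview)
Your proposal is correct and follows essentially the same approach as the paper's own proof: membership in $\NP$ via a polynomial-time verifiable certificate $S$, and $\NP$-hardness by invoking Claims~\ref{claim:=>} and~\ref{claim:<=} (together with Lemma~\ref{lem:3-sat-eq}) to complete the reduction from $3$-$\mathcal{SAT}_{equal}$. Your write-up is simply more explicit about the choice of $a,b$ and the equivalence $b\cdot|N[S]|\ge a\cdot|S|$, but there is no substantive difference.
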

	\begin{proof}
		By Claims~\ref{claim:=>} and \ref{claim:<=}, $Split \; \mathcal{MRCE}$ is NP-hard.
		$Split \; \mathcal{MRCE}$ is in NP, since given a potential solution $S \subseteq V(G)$, we can check in polynomial time whether $S$ is connected, $v_0 \in S$ and $|N[S]|/|S|$ satisfies the requested ratio.
	\end{proof}
	
	\subsection{Approximation} 
	
	We now turn our attention to a polynomial time approximation scheme for $Split \MRCE$.
	Our algorithm is parameterized by any fixed integer $k > 0$ and provides an approximation guarantee of $\frac{k}{k+2}$.
	Intuitively, the idea is that, given the best MRCE ratio when the set size is restricted to be at most $k+2$, the overall optimal ratio cannot be much better due to a ratio growth property.
	Additionally, connectivity is ensured due to the special topology of split graphs.
	Below, the approach is described formally in Algorithm~\ref{alg:split-improved}.
	Lemma~\ref{lem:ind} restricts the structure of a feasible MRCE solution on split graphs and the analysis follows in Theorem~\ref{thm:split-new}.
	
	\begin{algorithm}
		\SetKwInOut{Input}{Input}
		\SetKwInOut{Output}{Output}
		\DontPrintSemicolon
		
		\Input{A split graph $G = (V(G), E(G))$, a node $v_0 \in V(G)$ and a fixed integer $k > 0$}
		\Output{An MRCE solution and its corresponding ratio as a pair}
		
		\vspace{0.5mm}
		
		$S_{apx} \leftarrow \argmax_{S \in Con(G, v_0), 1 \le |S| \le k+2} |N[S]|/|S|$\;
		
		\vspace{0.5mm}
		
		\textbf{return} $(S_{apx}, |N[S_{apx}]|/|S_{apx}|  )$\;
		
		\caption{Approximate Split MRCE}
		\label{alg:split-improved}
	\end{algorithm}
	
	\begin{lemma}\label{lem:ind}
		Let $G$ be a split graph, $v_0 \in V(G)$ the requested root node and $(I,C)$ a partition of $G$ into an independent set $I$ and a clique $C$ where $|C| = \omega(G)$.
		Any feasible solution for $Split \MRCE$ containing nodes in $I$ can be transformed into another feasible solution with no nodes in $I$ (except maybe for $v_0$) which achieves a non-decreased MRCE ratio. 
	\end{lemma}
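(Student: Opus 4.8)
The plan is to build the required solution by repeatedly deleting the ``bad'' vertices --- the vertices of $S \cap I$ other than $v_0$ --- one at a time, and to show that each deletion keeps the set a feasible $Split \MRCE$ solution while not decreasing the ratio. The one structural fact I would lean on throughout is that, since $I$ is independent, every vertex $u \in I$ has \emph{all} of its neighbors inside the clique, i.e.\ $N(u) \subseteq C$.

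First I would dispose of the degenerate case: if the only vertex of $S$ lying in $I$ is $v_0$, there is nothing to prove. Otherwise $S$ contains some $u \in I$ with $u \neq v_0$, and since $S$ is connected with $|S| \ge 2$, this $u$ has a neighbor in $S$, which by the structural fact lies in $C$; hence $S \cap C \neq \emptyset$. The key consequence I would record is that whenever $S \cap C \neq \emptyset$ the clique is fully dominated, $C \subseteq N[S]$, because any clique vertex of $S$ is adjacent to every other vertex of $C$.

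Now for the transformation itself. Fix any $u \in (S \cap I) \setminus \{v_0\}$ and set $S' = S \setminus \{u\}$, so that $v_0 \in S'$. For connectivity I would give a clique-shortcut argument: any path in $G[S]$ that uses $u$ as an internal vertex enters and leaves $u$ through two neighbors $a,b \in N(u)\cap S \subseteq C$, and since $C$ is a clique the edge $\{a,b\}$ bypasses $u$; hence $S'$ is still connected. For the ratio the crucial claim is $N[S'] = N[S]$. Since only $I$-vertices are removed, $S' \cap C = S \cap C \neq \emptyset$, so $C \subseteq N[S']$ as well; the only vertices whose neighborhood membership could change are $u$ and the neighbors of $u$, but $u$ retains a neighbor in $S' \cap C$ (so $u \in N[S']$) and every neighbor of $u$ lies in $C \subseteq N[S']$. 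Thus $|N[S']| = |N[S]|$ while $|S'| = |S|-1$, and the ratio strictly increases.

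Finally I would iterate this deletion over all of $(S \cap I)\setminus\{v_0\}$; the invariant $S \cap C \neq \emptyset$ persists at every step (only $I$-vertices are ever removed), so the same argument applies each time, and the end product is a connected set containing $v_0$, no vertex of $I$ except possibly $v_0$, and a ratio at least as large as the original. The step I expect to require the most care is exactly the claim $N[S'] = N[S]$: a naive deletion estimate would worry about losing the clique-neighbors of $u$ that were attached to $S$ only through $u$, and the whole argument hinges on noticing that those neighbors sit in $C$, which is already entirely inside $N[S]$ once $S \cap C \neq \emptyset$. I would also note that the hypothesis $|C| = \omega(G)$ is not actually needed here; any split partition of $G$ suffices for this lemma.
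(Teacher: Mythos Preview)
Your proof is correct, but it proceeds differently from the paper's. The paper \emph{replaces} each $u\in (S\setminus\{v_0\})\cap I$ by an arbitrary clique neighbour $u'\in N(u)\subseteq C$, building $S'=\{v_0\}\cup(S\cap C)\cup\{\,u':u\in (S\setminus\{v_0\})\cap I\,\}$, and then argues $|S'|\le|S|$ and $|N[S']|\ge|N[S]|$ via $N[u]\subseteq N[u']$. You instead simply \emph{delete} the offending vertices one by one and use the clique-shortcut argument for connectivity together with the observation that, once $S\cap C\neq\emptyset$, the whole clique already sits inside $N[S]$, so removing $u$ cannot lose any dominated vertex. Your route is slightly more elementary and in fact yields a strictly sharper conclusion (the ratio strictly increases whenever an $I$-vertex other than $v_0$ is removed, since $|N[S']|=|N[S]|$ while $|S'|=|S|-1$), whereas the paper only claims a non-decrease. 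Your remark that the hypothesis $|C|=\omega(G)$ is unused is also correct; the paper does not invoke it in its proof either.
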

	\begin{proof}
		Suppose we are given a set $S\in Con(G, v_0)$, where $S \cap I \neq \emptyset$.
		We form a new feasible solution $S'$ as follows: include $v_0$ and all nodes in $S \cap C$. 
		Then, for each node $u \in (S \setminus \{v_0\}) \cap I$, let $u' \in N(u)$ stand for an arbitrarily selected neighbor of $u$.
		If $u' \notin S'$, add $u'$ to $S'$, otherwise proceed.
		Notice that $u' \in C$ since $u \in I$ and so $N(u) \subseteq C$.
		Thus, in the end it holds $(S' \setminus \{v_0\}) \cap I = \emptyset$.
		
		Now, let us compare the MRCE ratios of the two solutions.
		By construction, we know $|S'| \le |S|$ since the clique nodes of $S$ are surely in $S'$ and some more nodes may be added but at most as many as the independent set nodes of $S$.
		Moreover, it holds $|N[S']| \ge |N[S]|$, since for each pair $u, u'$ mentioned above we get $N[u] \subseteq N[u']$.
		That is, $u'$ contributes at least as many neighbors as $u$, i.e., $N(u) \subseteq N(u')$, since $u' \in C$ means $C \subseteq N(u')$ and $u \in I$ implies $N(u) \subseteq C$.
		Overall, we get $|N[S']|/|S'| \ge |N[S]|/|S|$.
	\end{proof}
	
	\begin{theorem}\label{thm:split-new}
		For any fixed integer $k > 0$, Algorithm~\ref{alg:split-improved} runs in $\mathcal{O}(n^{k+1})$ time and returns a $\frac{k}{k+2}$-approximation for $Split \MRCE$.
	\end{theorem}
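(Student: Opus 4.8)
The plan is to prove the two assertions of the theorem separately, spending most of the effort on the approximation guarantee. For the running time, I would observe that Algorithm~\ref{alg:split-improved} merely enumerates every connected set $S$ with $v_0\in S$ and $1\le|S|\le k+2$. Such a set is fixed by choosing at most $k+1$ vertices besides $v_0$, so there are $\sum_{j=0}^{k+1}\binom{n-1}{j}=\mathcal{O}(n^{k+1})$ candidates. For each candidate I would test connectivity and compute $|N[S]|$ as the union of at most $k+2$ closed neighborhoods; for fixed $k$ this is polynomial per set and the enumeration dominates, which yields the stated bound.

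For the guarantee, I would first invoke Lemma~\ref{lem:ind} to fix a canonical optimal solution $S^*$ with $S^*\setminus\{v_0\}\subseteq C$, writing $\mathrm{OPT}=|N[S^*]|/s$ with $s=|S^*|$. The structural fact I would exploit is that $f(S)\eqdef|N[S]|=\bigl|\bigcup_{v\in S}N[v]\bigr|$ is a monotone submodular (coverage) function. If $s\le k+2$, then $S^*$ is itself one of the enumerated sets, so the algorithm already returns a ratio of at least $\mathrm{OPT}$. Otherwise $s>k+2$, and I would construct a set $S'$ of size $k+2$ by starting from $\{v_0\}$ and greedily adding vertices of $S^*\setminus\{v_0\}$, each time picking the one with largest marginal gain of $f$.

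Submodularity then guarantees that the marginal gains along this greedy-within-$S^*$ order, say $g_1\ge g_2\ge\cdots\ge g_{s-1}\ge 0$, are non-increasing and sum to $f(S^*)-f(\{v_0\})$. Hence the first $k+1$ gains realized by $S'$ satisfy $\sum_{i\le k+1}g_i\ge\frac{k+1}{s-1}\bigl(f(S^*)-f(\{v_0\})\bigr)$, since the average of the largest $k+1$ terms is at least the overall average. Dropping the nonnegative term $f(\{v_0\})$ and using $\frac{k+1}{s-1}\ge\frac{k}{s}$ (equivalently $s+k\ge 0$), this gives $f(S')\ge\frac{k}{s}\,f(S^*)$, whence $\frac{f(S')}{k+2}\ge\frac{k}{k+2}\cdot\frac{f(S^*)}{s}=\frac{k}{k+2}\,\mathrm{OPT}$. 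As $S'$ is a size-$(k+2)$ candidate, the algorithm's output is at least this ratio.

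The step I expect to be the main obstacle is ensuring that $S'$ is \emph{connected}, since subsets of a connected set need not be connected in general. Here I would lean on the split structure supplied by Lemma~\ref{lem:ind}: every greedily added vertex lies in the clique $C$, so the added vertices are pairwise adjacent. When $v_0\in C$ this already makes each $S'$ connected. When $v_0\in I$, connectivity also requires $S'$ to contain a neighbor of $v_0$; because $S^*$ is connected, some such neighbor $w\in S^*\cap C$ exists, and if the greedy top $k+1$ happens to omit all neighbors of $v_0$ I would simply replace one greedy pick by $w$. By monotonicity the resulting connected set still satisfies $f(S'')\ge\frac{k}{s-1}f(S^*)\ge\frac{k}{s}f(S^*)$, using the top-$k$ version of the averaging bound, so the $\frac{k}{k+2}$ guarantee survives. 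Carefully verifying this accounting in the $v_0\in I$ case is the delicate part of the argument.
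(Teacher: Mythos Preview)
Your argument is correct, and it reaches the same $\frac{k}{k+2}$ bound as the paper, but by a genuinely different mechanism. The paper works by \emph{removal}: starting from an optimal $S_{opt}$ (which by Lemma~\ref{lem:ind} may be assumed to satisfy $S_{opt}\setminus\{v_0\}\subseteq C$), it repeatedly deletes the vertex with the fewest \emph{exclusive} neighbors and shows, via a pigeonhole on exclusive contributions, that the ratio $|N[S_i]|/|S_i|$ never drops below $OPT$ along the way down to a size-$k$ set $S_k$. Connectivity is then restored by appending at most two vertices ($v_0$ and one clique representative $r\in N(v_0)$), so the loss factor $k/(k+2)$ comes entirely from those two extra vertices. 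Your argument instead works by \emph{addition}: you start at $\{v_0\}$, greedily add $k+1$ vertices of $S^*$, and invoke submodularity plus an averaging bound to get $f(S')\ge\frac{k+1}{s-1}f(S^*)\ge\frac{k}{s}f(S^*)$; connectivity is handled by a one-vertex swap in the $v_0\in I$ case, which you correctly account for with the top-$k$ version of the same bound. The paper's intermediate statement is sharper (the size-$k$ set itself already has ratio $\ge OPT$), which makes the origin of the $k/(k+2)$ factor more transparent; your route has the advantage of invoking only the standard monotone-submodular toolkit, which is arguably more portable. Both handle the running time at the same level of informality.
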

	\begin{proof}
		The algorithm computes a maximum value out of all connected subsets of size at most $k+2$, including $v_0$, and so it runs in $\mathcal{O}(n^{k+1})$ time.
		
		Let $S_{opt}$ stand for an optimal solution for $Split \MRCE$. 
		In other words, it holds $S_{opt} \in \argmax_{S \in Con(G, v_0)} |N[S]|/|S|$.
		We distinguish two cases based on the size of $S_{opt}$.
		
		If $|S_{opt}| \le k+2$, then Algorithm~\ref{alg:split-improved} considers $S_{opt}$ and either returns it or another solution achieving the same ratio.
		
		If $|S_{opt}| > k+2$, then consider the following procedure: repeatedly remove from $S_{opt}$ the node with the least contribution in the numerator until $k$ nodes are left.
		More formally, let us denote $|S_{opt}| = l$ and then $S_{opt} = S_l$.
		For $i = l-1, \ldots, k$, let $S_{i} = S_{i+1} \setminus \{u_{i+1}\}$ for some node $u_{i+1}$ that maximizes $|N[S_{i+1} \setminus\{v\}]|$ over all $v \in S_{i+1}$.
		Equivalently, let $p(v) = |N[S_{i+1}]| - |N[S_{i+1} \setminus \{v\}]|$ denote the number of exclusive neighbors of $v$ in $N[S_{i+1}]$.
		Then, $u_{i+1} \in \argmin_{v\in S_{i+1}} p(v)$. 
		Notice that, for any $i = l-1, \ldots, k$, it may be the case that $S_i$ is \emph{not} a feasible MRCE solution, since $v_0$ may be removed during this process.
		
		Now, let us show that the ratio does not decrease while performing the above process.
		For any $i \in \{l-1, \ldots, k\}$, let $|N[S_i]| = N_i$ and $|S_i| = n_i$.
		Assume $\frac{N_{i+1}}{n_{i+1}} > \frac{N_{i}}{n_{i}}$.
		We rewrite the inequality as $\frac{N_{i+1}}{n_{i+1}} > \frac{N_{i+1} - p(u_{i+1})}{n_{i+1}-1}$ which implies $p(u_{i+1}) > \frac{N_{i+1}}{n_{i+1}}$. 
		Since $u_{i+1}$ minimizes the value of $p(\cdot)$, it follows that, for every $v \in S_{i+1}$, $p(v) \ge p(u_{i+1})$.
		Furthermore, $N_{i+1} \ge \sum_{v \in S_{i+1}} p(v)$ because $N[S_{i+1}]$ includes all exclusive neighbors of each node.
		Putting everything together, we get $N_{i+1} \ge \sum_{v \in S_{i+1}} p(v) > \sum_{v \in S_{i+1}} \frac{N_{i+1}}{n_{i+1}} = n_{i+1} \frac{N_{i+1}}{n_{i+1}} = N_{i+1}$, a contradiction.
		Based on this observation, we get $\frac{N_k}{n_k} \ge \frac{N_{k+1}}{n_{k+1}} \ge \ldots \ge \frac{N_l}{n_l} = OPT$, where $OPT$ stands for the optimal MRCE number.
	
		From Lemma~\ref{lem:ind}, we may assume without loss of generality that $S_{opt}\setminus\{v_0\} \subseteq C$.
		Moreover, due to the removal procedure followed, $S_k\setminus\{v_0\} \subseteq S_{opt}\setminus\{v_0\} \subseteq C$.
		In the worst case, when $v_0 \in I$ and $v_0$ has no neighbor in $S_{k}$, we form $S' = S_k \cup \{v_0, r\}$ where $r \in N(v_0)$ is a representative of $v_0$ in the clique $C$ such that $S_k \subseteq N(r)$.
		Notice that, since $S' \supseteq S_k$, then $N[S'] \supseteq N[S_k]$.
		Since $|S'|=k+2$, $S'$ is considered by Algorithm~\ref{alg:split-improved} and therefore it holds $\frac{|N[S_{apx}]|}{|S_{apx}|} \ge \frac{|N[S']|}{|S'|}$ where $S_{apx}$ is the solution returned by Algorithm~\ref{alg:split-improved}.
		Overall, we get the approximation guarantee $\frac{|N[S_{apx}]|}{|S_{apx}|} \ge \frac{|N[S']|}{|S'|} \ge \frac{|N[S_k]|}{k + 2} = \frac{k}{k+2} \frac{|N[S_k]}{k} \ge \frac{k}{k+2}\frac{N_l}{n_l} = \frac{k}{k+2}OPT$.	
	\end{proof}
	
	\section{General Graphs}\label{sec:general}
	
	We hereby state a constant-factor approximation algorithm for the general case when the input graph $G$ has no specified structure. 
	Our algorithm and analysis closely follow the work in \cite{BCDS} for the \emph{Budgeted Connected Dominating Set} (shortly BCDS) problem.
	
	In BCDS, the input is a graph $G$ with $n$ vertices and a natural number $k$ and we are asked to return a \emph{connected} subgraph, say $S$, of at most $k$ vertices of $G$ which maximizes the number of dominated vertices $|N[S]|$. 
	Khuller et al.\ \cite{BCDS} prove that there is a $(1-1/e)/13$ approximation algorithm for BCDS.
	In broad lines, their algorithmic idea is to compute a greedy dominating set and its corresponding \emph{profit function} and then obtain a connected subgraph via an approximation algorithm for the \emph{Quota Steiner Tree} (shortly QST) problem.
	
	\begin{definition}[$\mathcal{QST}$]
		Given a graph $G$, a node profit function $p: V(G) \rightarrow \mathbb{N} \cup \{0\}$, an edge cost function $c: E(G) \rightarrow \mathbb{N} \cup \{0\}$ and a quota $q \in \mathbb{N}$, find a subtree $T$ that minimizes $\sum_{e\in E(T)} c(e)$ subject to the condition $\sum_{v \in V(T)} p(v) \ge q$. 
	\end{definition}
	
	Evidently, both MRCE and BCDS require finding a connected subset $S \subseteq V(G)$ with many neighbors. 
	Nonetheless, while in BCDS we only care about maximizing $|N[S]|$, in MRCE we care about maximizing $|N[S]|/|S|$ with the additional demand that $v_0 \in S$.
	In order to deal with this extra requirement, in this paper, we are going to employ the rooted version of QST, namely the \emph{Rooted Quota Steiner Tree} (shortly RQST) problem.
	
	\begin{definition}[$\mathcal{RQST}$]
		Given a graph $G$, a root $v_0 \in V(G)$, a profit function $p: V(G) \rightarrow \mathbb{N} \cup \{0\}$, an edge cost function $c: E(G) \rightarrow \mathbb{N} \cup \{0\}$ and a quota $q \in \mathbb{N}$, find a subtree $T$ that minimizes $\sum_{e\in E(T)} c(e)$ subject to the conditions $\sum_{v \in V(T)} p(v) \ge q$ and $v_0 \in T$. 
	\end{definition}
	
	Garg \cite{Garg} gave a $2$-approximation algorithm for the (rooted) $k$-\emph{Minimum Spanning Tree} (shortly $k$-MST) problem based on the Goemans-Williamson \emph{Prize-Collecting Steiner Tree} approximation algorithm (shortly GW) \cite{G1, G2}. 
	Johnson et al.\ \cite{QST} showed that any polynomial-time $\alpha$-approximation algorithm for (rooted) $k$-MST, which applies GW, yields a polynomial-time $\alpha$-approximation algorithm for (rooted) QST. 
	Hence, Theorem~\ref{thm:RQST} below follows.
	
	\begin{theorem}[\cite{Garg, QST}]\label{thm:RQST}
		There is a $2$-approximation algorithm for $RQST$.
	\end{theorem}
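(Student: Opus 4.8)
The plan is to obtain the result purely by composing the two cited ingredients, with care taken for the rooted constraint. First I would observe that RQST strictly generalizes rooted $k$-MST: taking the uniform profit function $p \equiv 1$ and quota $q = k$ recovers the problem of finding a minimum-cost subtree on at least $k$ vertices that contains $v_0$. Conversely, the content of \cite{QST} is that a QST instance can be solved using a $k$-MST subroutine, provided that subroutine is built on the Goemans--Williamson prize-collecting Steiner tree primal-dual scheme (GW). So the proof reduces to two tasks: (i) exhibiting a GW-based $2$-approximation for rooted $k$-MST, and (ii) pushing it through the \cite{QST} reduction while preserving the root.

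For step (i), I would invoke Garg's algorithm \cite{Garg}, which is already a $2$-approximation for rooted $k$-MST and is explicitly built on GW; this is exactly the hypothesis required by the meta-theorem of \cite{QST}. For step (ii), I would recall how \cite{QST} replaces the cardinality target of $k$-MST by the profit quota of QST: the GW scheme charges a penalty for every vertex left out of the tree, and by weighting these penalties by the profits $p(v)$ and performing the Lagrangian search over the penalty multiplier that GW-based $k$-MST algorithms already carry out, one collects profit of total value at least $q$ at cost at most $\alpha$ times optimal, with $\alpha = 2$ inherited unchanged. Since the profits lie in $\mathbb{N} \cup \{0\}$, this weighting is well-defined and the multiplier search remains polynomial.

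The one point that needs explicit attention — and the step I expect to be the main obstacle — is the rooted requirement $v_0 \in T$, which is not literally part of the QST statement in \cite{QST}. I would handle this by noting that GW natively supports a fixed terminal: one forces $v_0$ into every candidate tree (equivalently, assigns it infinite penalty so that it is never pruned), and both Garg's $k$-MST guarantee and the \cite{QST} reduction are stated to hold in this rooted form. Threading the root through the Lagrangian search unchanged then yields a subtree $T$ with $v_0 \in T$, $\sum_{v \in V(T)} p(v) \ge q$, and total edge cost $\sum_{e \in E(T)} c(e)$ at most twice the optimum, establishing the claimed $2$-approximation for RQST.
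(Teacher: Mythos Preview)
Your proposal is correct and follows essentially the same approach as the paper: the paper does not give a standalone proof of this theorem at all, but merely cites Garg's GW-based $2$-approximation for rooted $k$-MST and Johnson et al.'s observation that any such GW-based $\alpha$-approximation transfers to (rooted) QST, exactly as you do. Your write-up is in fact more detailed than the paper's one-sentence justification, particularly in explicitly threading the root $v_0$ through the construction.
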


	\subparagraph{The Algorithm.}
	Algorithm~\ref{alg:greedy}, namely the \emph{Greedy Dominating Set} (shortly GDS) algorithm, describes a greedy procedure to obtain a dominating set and a corresponding profit function for the input graph $G$.
	At each step, a node dominating the maximum number of the currently undominated vertices is chosen for addition into the dominating set.
	
	Algorithm~\ref{alg:mrce}, namely the \emph{Greedy MRCE} algorithm, makes use of GDS to obtain a dominating set for a slightly modified version of $G$, namely a graph $G'$, which is the same as $G$ with the addition of $n^2$ leaves to node $v_0$.
	Then, the algorithm outputs a connected subset $T_i$ (containing $v_0$) for any possible size $i$.
	Finally, the subset yielding the best MRCE ratio is chosen as our approximate solution.
	
	In terms of notation, we refer to the approximation algorithm implied by Theorem~\ref{thm:RQST} as the $2$-$RQST(G, v_0, p, q)$ algorithm with a graph $G$, a root node $v_0 \in V(G)$, a profit function $p: V(G) \rightarrow \mathbb{N} \cup \{0\}$ and a quota $q$ as input. 
	We omit including an edge cost function, since in our case all edges have the same cost, that is, cost $1$. 
	Furthermore, let $[n] \eqdef \{1, 2, 3, \ldots, n\}$. 
	
	Now, consider a connected set $S_i$ of size $i$ (which contains $v_0$) yielding the maximum number of dominated vertices, i.e. $S_i \in \argmax_{S:\; S\in Con(G, v_0),\; |S| = i} |N[S]|$. 
	We then denote $OPT_i \eqdef |N[S_i]|$ and use it in the quota parameter of $2$-$RQST$ at line $4$ of Greedy MRCE. 
	Yet, in the general case, we do not know $OPT_i$ and also such a quantity may be hard to compute. 
	To overcome this obstacle, notice that $OPT_i \in [i, n]$ and therefore we could \emph{guess} $OPT_i$, e.g., by running a sequential or binary search within the loop of Greedy MRCE and then keeping the best tree returned by $2$-$RQST$. 
	Notice that such an extra step requires at most a linear time overhead. Therefore, the running time of Greedy MRCE remains polynomial and is dominated by the running time of $2$-$RQST$.
	For presentation purposes, we omit this extra step and assume $OPT_i$ is known for each $i \in [n]$.
	
	In the analysis to follow, we focus on why this specific $(1-1/e)OPT_i$ quota is selected and how it leads to a $(1-1/e)/6$ approximation factor.
	
	\begin{figure}[h]
		\scalebox{0.9}{
			\begin{algorithm}[H]
				\SetKwInOut{Input}{Input}
				\SetKwInOut{Output}{Output}
				\DontPrintSemicolon
				
				\Input{A graph $G = (V(G), E(G))$}
				\Output{A dominating set $D \subseteq V(G)$ and a profit function $p: V(G) \rightarrow \mathbb{N} \cup \{0\}$}
				
				$D \leftarrow \emptyset$\;
				$U \leftarrow V(G)$\;
				\ForEach{$\upsilon \in V(G)$}{
					$p(\upsilon) \leftarrow 0$\;
				}
				\While{$U \neq \emptyset$}{
					$w \leftarrow \argmax_{\upsilon \in V(G) \setminus D} |N_U(\upsilon)|$ \tcc*{$N_U(\upsilon) = N[\{\upsilon\}]\cap U$}
					$p(w) \leftarrow |N_U(w)|$\;
					$U \leftarrow U \setminus N_U(w)$\;
					$D \leftarrow D \cup \{w\}$\;
				}
				\textbf{return} $(D, p)$
				
				\caption{Greedy Dominating Set (GDS) \cite{BCDS}}
				\label{alg:greedy}
			\end{algorithm}
		}
	\end{figure}
	\begin{figure}[h]
		\scalebox{0.9}{
			\begin{algorithm}[H]
				\SetKwInOut{Input}{Input}
				\SetKwInOut{Output}{Output}
				\DontPrintSemicolon
				
				\Input{A graph plus node pair $(G, v_0)$}
				\Output{An MRCE solution $S$ and its corresponding ratio $s$}
				
				Construct $G'$: same as $G$ with extra $n^2$ leaves attached to $v_0$ \\
				$(D, p) \leftarrow GDS(G')$\;
				\ForEach{$i \in [n]$}{
					$T_i \leftarrow $ $2$-$RQST(G, v_0, p, (1-\frac{1}{e})OPT_i)$ \\
				}
				Let $i^* = \argmax_{i \in [n]} |N[T_i]|/|T_i|$\;
				\textbf{return} $(T_{i^*}, |N[T_{i^*}]|/|T_{i^*}|)$\;
				
				\caption{Greedy MRCE}
				\label{alg:mrce}
			\end{algorithm}
		}
	\end{figure}
	
	\subparagraph{Analysis. }
	Let us consider some step $i$ of the loop in the Greedy MRCE algorithm.
	Recall that $OPT_i = \max_{S:\; S\in Con(G, v_0),\; |S| = i} |N[S]|$. That is, $OPT_i$ stands for the maximum number of dominated vertices by a connected subset of size $i$, which contains $v_0$. 
	In the call to $2$-$RQST$, notice that, although $OPT_i$ refers to the graph $G$ and by definition contains $v_0$, the profit function $p$ (as well as the corresponding greedy dominating set $D$) stems from running GDS on $G'$.
	The reason for this choice is, due to the extra $n^2$ leaves attached to $v_0$ in $G'$, to force $v_0$ into the greedy dominating set $D$ and assign to it the highest profit amongst all nodes.
	Below, let $S_{i,G'} \in \argmax_{S:\; S\subseteq V(G),\; |S| = i,\; S\text{ is connected}} |N[S]|$ and $OPT_i^{G'} \eqdef |N[S_{i, G'}]|$, i.e., $OPT_{i,G'}$ denotes the maximum number of nodes dominated by a size-$i$ subset of nodes in $G'$.
	
	\begin{claim}\label{claim:SiG}
		For any $i \in [n]$, it holds $v_0 \in S_{i, G'}$ .
	\end{claim}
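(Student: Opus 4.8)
The plan is to exploit the $n^2$ pendant leaves that Greedy MRCE attaches to $v_0$ in $G'$: these make $v_0$ dominate far more vertices than any vertex of $G$ possibly can, so the domination-maximizing set is forced to contain $v_0$. I would establish the claim by a direct comparison of two quantities, namely the largest closed neighborhood achievable by a size-$i$ connected set that contains $v_0$ against the largest one achievable by a size-$i$ connected set that omits $v_0$.

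First I would lower-bound $OPT_i^{G'}$. Since $G$ (and hence $G'$) is connected and has $n \ge i$ vertices, there exists a connected set of size $i$ containing $v_0$ (grow $\{v_0\}$ along a spanning subtree of $G$, or simply take $v_0$ together with $i-1$ of its leaves, which is feasible as $i-1 \le n-1 \le n^2$). For any such set $S' \ni v_0$, the adjacency of $v_0$ to all $n^2$ leaves in $G'$ forces every leaf into $N[S']$, so $|N[S']| \ge n^2 + 1$. Taking the maximum, $OPT_i^{G'} \ge n^2 + 1$.

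Next I would upper-bound the closed neighborhood of any size-$i$ connected set $S$ with $v_0 \notin S$. The key observation is that each leaf has degree $1$ and is adjacent \emph{only} to $v_0$; consequently no leaf can lie in a connected set that omits $v_0$ (for $i \ge 2$ connectivity would drag $v_0$ into the set, while for $i = 1$ a single leaf dominates just itself and $v_0$), and no leaf can enter $N[S]$ at all unless $v_0 \in S$. Hence $N[S] \subseteq V(G)$, giving $|N[S]| \le |V(G)| = n$.

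Combining the two bounds finishes the argument: every size-$i$ connected set omitting $v_0$ dominates at most $n$ vertices, whereas $OPT_i^{G'} \ge n^2 + 1 > n$ for all $n \ge 1$. Therefore no maximizer can omit $v_0$, i.e. $v_0 \in S_{i,G'}$. I expect no genuine obstacle here; the only point demanding slight care is the upper bound for the $v_0$-free case, where one must verify that the leaves contribute nothing to $N[S]$ — precisely because they are pendant vertices hanging off $v_0$.
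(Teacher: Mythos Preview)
Your argument is correct and follows essentially the same contradiction/comparison structure as the paper's proof: both exhibit the set $\{v_0\}$ together with $i-1$ of the added leaves to certify $OPT_i^{G'}\ge n^2+1$, and both upper-bound $|N[S]|$ for any $v_0$-free connected $S$ to force a contradiction. The only difference is the upper bound you use---you observe that no leaf can appear in $N[S]$ when $v_0\notin S$, hence $|N[S]|\le n$, whereas the paper simply bounds each vertex's contribution by $\Delta(G)+1$ to get $|N[S]|\le i(\Delta(G)+1)\le n^2$; your bound is sharper but the two proofs are otherwise the same.
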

	\begin{proof}
		Suppose $v_0 \notin S_{i, G'}$ for some $i \in [n]$.
		$S_{i, G'}$ consists of $i$ vertices each contributing at most $\Delta(G)$ neighbors in terms of domination.
		Thence, $OPT_{i, G'} \le i + i\cdot\Delta(G) = i(\Delta(G)+1) \le n^2$ since $i \le n$ and $\Delta(G) \le n-1$.
		However, we can pick another subset including $v_0$ and $i-1$ leaves of $v_0$ to get at least $n^2 + 1$ dominated nodes, i.e., $v_0$ and all its leaves. 
	\end{proof}
	
	Let us introduce some further notation for the proofs to follow.
	Let $L_1 = S_{i, G'}$ and $L_2 = N(L_1)$, that is, $OPT_{i,G'} = |L_1 \cup L_2|$.
	Also, let $L_3 = N(L_2)\setminus L_1$ and $R = V(G) \setminus (L_1 \cup L_2 \cup L_3)$, where $R$ denotes the remaining vertices, i.e., those outside the three layers $L_1, L_2, L_3$.
	Let us now consider the intersection of these layers with the greedy dominating set $D$ returned by GDS.
	Let $L'_j = D \cap L_j$ for $j = 1,2,3$ and $D'_i = \{v_1, v_2, \ldots, v_i\}$ denote the first $i$ vertices from $L'_1 \cup L'_2 \cup L'_3$ in the order selected by the greedy algorithm. 
	In order to bound the total profit in $D'_i$, we define $g_j = \sum_{k=1}^j p(v_k)$ as the profit we gain from the first $j$ vertices of $D'_i$.
	
	\begin{claim}[Variation of Claim 1 in \cite{BCDS}]\label{claim:g_j}
		It holds $g_{j+1} - g_{j} \ge \frac{1}{i}(OPT_{i, G'} - g_j)$.
	\end{claim}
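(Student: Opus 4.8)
The plan is to read the inequality $g_{j+1}-g_{j} \ge \frac{1}{i}(OPT_{i, G'} - g_j)$ as a per-step coverage guarantee for the greedy procedure GDS, in the spirit of the classical analysis of the greedy set-cover heuristic, but adapted so that the layered decomposition $L_1, L_2, L_3, R$ controls exactly which selected vertices may interfere. Since $g_{j+1}-g_j = p(v_{j+1})$ by definition of $g$, it suffices to prove $p(v_{j+1}) \ge \frac{1}{i}(OPT_{i,G'}-g_j)$, where $v_{j+1}$ is the $(j+1)$-st vertex that GDS selects among the layer vertices $L'_1 \cup L'_2 \cup L'_3$.

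First I would isolate the key structural fact that no vertex of $R$ can dominate any vertex of $L_1 \cup L_2$. Indeed, if $r \in R$ were adjacent to some $x \in L_1$ then $r \in N(L_1) = L_2$, and if $r$ were adjacent to a vertex of $L_2$ then $r \in N(L_2) \subseteq L_1 \cup L_2 \cup L_3$; either way $r \notin R$, a contradiction. As also $r \notin L_1 \cup L_2$, we get $N[r] \cap (L_1 \cup L_2) = \emptyset$. Consequently, among the vertices GDS selects, only those lying in $L_1 \cup L_2 \cup L_3$, namely $v_1, v_2, \ldots$, can ever remove a vertex of $L_1 \cup L_2$ from the undominated set $U$. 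This neutralizes the vertices GDS may pick from $R$ in between $v_j$ and $v_{j+1}$.

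Next, fix $j$ and consider the state of GDS just before $v_{j+1}$ is chosen, with $U$ the current undominated set. Since $OPT_{i,G'} = |L_1 \cup L_2|$ and, by the structural fact, the only already-selected vertices that have touched $L_1 \cup L_2$ are $v_1, \ldots, v_j$ (whose total profit is $g_j$, so they dominate at most $g_j$ vertices of $L_1 \cup L_2$), the set $Q \eqdef (L_1 \cup L_2) \cap U$ of still-undominated optimal vertices satisfies $|Q| \ge OPT_{i,G'} - g_j$. Every $q \in Q$ lies in $N[L_1]$; moreover $q$ cannot lie in $N[x]$ for any $x \in L_1$ already placed in $D$, since such an $x$ would have removed $q$ from $U$ upon its selection and a vertex never re-enters $U$. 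Hence $Q \subseteq N[L_1 \setminus D]$, and because $|L_1 \setminus D| \le |L_1| = i$, an averaging argument yields some $x \in L_1 \setminus D$ with $|N[x] \cap Q| \ge |Q|/i$.

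Finally I would close the loop via greediness: $x \notin D$ is a legitimate candidate at this step, and $N[x] \cap Q \subseteq N[\{x\}] \cap U = N_U(x)$, so $p(v_{j+1}) = |N_U(v_{j+1})| = \max_{u \notin D} |N_U(u)| \ge |N_U(x)| \ge |Q|/i \ge (OPT_{i,G'}-g_j)/i$, which is the claim. The step I expect to be the main obstacle, and the reason the three-layer decomposition is introduced in the first place, is the third one: ensuring that the $R$-vertices GDS interleaves do not erode the pool $Q$, so that the clean bound $|Q| \ge OPT_{i,G'}-g_j$ survives; the averaging and greedy-maximality steps are then routine. A minor boundary case to handle separately is when GDS selects fewer than $i$ layer vertices in total, in which case those vertices already dominate all of $L_1 \cup L_2$ and $g$ has reached $OPT_{i,G'}$, so the inequality holds trivially for the remaining indices.
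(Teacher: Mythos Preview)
Your proposal is correct and follows essentially the same route as the paper: isolate $L_1\cup L_2$ from anything GDS picks in $R$, bound the still-undominated part of $L_1\cup L_2$ by $OPT_{i,G'}-g_j$, average over the $i$ vertices of $L_1$, and conclude via the greedy choice. If anything, you are more careful than the paper, which asserts ``$u\in L_1$ (and $u\notin D$)'' without the explicit $Q\subseteq N[L_1\setminus D]$ justification you supply.
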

	\begin{proof}
		Consider the iteration of GDS where $v_{j+1}$ is picked for inclusion in $D$.
		Any node $w \in L_1 \cup L_2$, which is already dominated by some node in $D$, must be dominated by a node of $D'_i$ in $\{v_1, \ldots, v_j\}$, since $w$ cannot be dominated by a node lying in $R$.
		Hence, at most $g_j$ vertices of $L_1 \cup L_2$ are dominated thus far.
		Equivalently, at least $|L_1 \cup L_2| - g_j = OPT_{i, G'} - g_j$ vertices remain undominated.
		Since $|L_1| = i$ vertices neighbor all the above undominated ones, by a pigeonhole argument,
		there exists at least one node $u \in L_1$ (and $u \notin D$) which neighbors at least $\frac{1}{i}(OPT_{i, G'} - g_j)$ of them.
		Since GDS picked $v_{j+1}$ at this iteration instead of $u$, it follows $p(v_{j+1}) \ge p(u) \ge \frac{1}{i}(OPT_{i, G'} - g_j)$, where $p(v_{j+1}) = g_{j+1} - g_j$.
	\end{proof}
	
	\begin{lemma}[Variation of Lemma 5.1 in \cite{BCDS}]\label{lemma:D_i}
		There exists a subset $D'_i \subseteq D$ of size $i$ with total profit at least $(1-\frac{1}{e})OPT_i$.
		Further, $D'_i$ can be connected using at most $2i$ Steiner nodes and contains $v_0$.
	\end{lemma}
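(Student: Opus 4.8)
The plan is to establish the three assertions separately, starting with the profit bound. I would begin from Claim~\ref{claim:g_j}, whose one-step estimate $g_{j+1}-g_{j} \ge \frac{1}{i}(OPT_{i,G'}-g_j)$ I rewrite as $OPT_{i,G'}-g_{j+1} \le (1-\frac{1}{i})(OPT_{i,G'}-g_j)$. Iterating this from $g_0 = 0$ gives $OPT_{i,G'}-g_i \le (1-\frac{1}{i})^i OPT_{i,G'} \le \frac{1}{e}OPT_{i,G'}$, using $\ln(1-\frac{1}{i}) \le -\frac{1}{i}$, so the total profit of $D'_i$ obeys $g_i \ge (1-\frac{1}{e})OPT_{i,G'}$. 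To pass from $OPT_{i,G'}$ to the quantity $OPT_i$ claimed in the statement, I would note that any connected size-$i$ set $S$ containing $v_0$ that attains $OPT_i$ in $G$ is also an admissible candidate for $OPT_{i,G'}$, and since $N_{G'}[S] \supseteq N_G[S]$ we get $OPT_{i,G'} \ge OPT_i$, whence $g_i \ge (1-\frac{1}{e})OPT_i$. The only routine caveat is that this unrolling assumes $D'_i$ genuinely has $i$ vertices; if fewer than $i$ greedy nodes lie in $L'_1 \cup L'_2 \cup L'_3$, then by the reasoning of Claim~\ref{claim:g_j} (no vertex of $L_1 \cup L_2$ can be dominated from $R$) all of $L_1 \cup L_2$ is already dominated, the accumulated profit is at least $OPT_{i,G'}$, and $D'_i$ may be padded with further greedy nodes without harm.

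For the connectivity claim I would exploit the three-layer structure together with the fact that $L_1 = S_{i,G'}$ is, by definition, a connected set of exactly $i$ vertices. The construction is to take a spanning tree of $L_1$, attach each vertex of $D'_i \cap L'_2$ directly to a neighbor it possesses in $L_1$ (such a neighbor exists since $L_2 = N(L_1)$), and attach each vertex of $D'_i \cap L'_3$ to $L_1$ by a length-two path through one of its neighbors in $L_2$ (which exists since $L_3 \subseteq N(L_2)$, and that $L_2$-vertex in turn has a neighbor in $L_1$). The only non-$D'_i$ vertices used are the $L_1$-vertices outside $D'_i$, at most $i - |D'_i \cap L_1|$ of them, and one intermediate $L_2$-vertex per vertex of $D'_i \cap L_3$, at most $|D'_i \cap L_3|$ of them. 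Using $|D'_i \cap L_1| + |D'_i \cap L_2| + |D'_i \cap L_3| = i$ to bound $|D'_i \cap L_3| \le i - |D'_i \cap L_1|$, the total number of Steiner vertices is at most $(i - |D'_i \cap L_1|) + (i - |D'_i \cap L_1|) = 2i - 2|D'_i \cap L_1| \le 2i$, as required. I expect this layer bookkeeping to be the main obstacle, since it is precisely here that the definitions of $L_2$ and $L_3$ and the connectedness of $L_1$ must be used carefully to guarantee that every attaching path stays within the three layers and reuses already-included vertices rather than spending fresh Steiner nodes.

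Finally, for $v_0 \in D'_i$ I would invoke the $n^2$-leaves modification: in $G'$ the root $v_0$ dominates more than $n^2$ vertices, strictly more than any vertex of $V(G)$ can dominate, so the greedy procedure GDS selects $v_0$ at its very first step. By Claim~\ref{claim:SiG} we also have $v_0 \in S_{i,G'} = L_1$, hence $v_0 \in L'_1 \subseteq L'_1 \cup L'_2 \cup L'_3$. Being the first vertex chosen by GDS overall, $v_0$ is also the first element of the greedy order restricted to these layers, so $v_0 = v_1 \in D'_i$ for every $i \ge 1$, which completes the three parts of the statement.
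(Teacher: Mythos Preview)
Your proposal is correct and follows essentially the same route as the paper: unroll the recurrence of Claim~\ref{claim:g_j} to get $g_i \ge (1-\tfrac{1}{e})OPT_{i,G'} \ge (1-\tfrac{1}{e})OPT_i$, and connect $D'_i$ by adjoining all of $L_1$ together with one $L_2$-intermediary per vertex of $D'_i \cap L_3$. Your version is in fact a bit more careful than the paper's---you explicitly argue $v_0 \in D'_i$ (the paper only notes $v_0 \in L_1 \subseteq \hat{D}_i$), you handle the edge case where $L'_1 \cup L'_2 \cup L'_3$ contributes fewer than $i$ greedy nodes, and your Steiner count $2(i - |D'_i \cap L_1|)$ is marginally sharper than the paper's plain~$2i$.
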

	\begin{proof}
		By solving the recurrence from Claim \ref{claim:g_j}, we get $g_j \ge (1 - (1-\frac{1}{i})^j)OPT_{i, G'}$.
		Thence,
		$$ \sum_{v \in D'_i} p(v) = g_i \ge \left(1 - \left(1-\frac{1}{i}\right)^i\right)OPT_{i, G'} \ge \left(1- \frac{1}{e}\right)OPT_{i, G'} \ge \left(1- \frac{1}{e}\right)OPT_i$$
		
		since $(1-\frac{1}{i})^i \le 1/e$ for $i \ge 1$ and $OPT_{i, G'} \ge OPT_i + n^2$, since the subset $S_i$, where $N[S_i] = OPT_i$, is a feasible solution for the maximum number of dominated vertices in $G'$, giving a number equal to $OPT_i$ plus the $n^2$ $v_0$-leaves present in $G'$.
		
		Now, let us show that an extra $2i$ nodes are enough to ensure that $D'_i$ is connected.
		We select a subset $D''_i \subseteq L_2$  of size at most $|L_3 \cap D'_i| \le i$ to dominate all vertices of $D'_i \cap L_3$.
		Then, we ensure that all vertices are connected by simply adding all the $i$ vertices of $L_1$.
		Thus, $\hat{D}_i = D'_i \cup D''_i \cup L_1$ induces a connected subgraph that contains at most $3i$ vertices (one of them being $v_0$).
	\end{proof}

	\begin{theorem}
		There exists a $\frac{1}{6}(1-\frac{1}{e})$-approximation for MRCE in general graphs.
	\end{theorem}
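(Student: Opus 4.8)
The plan is to analyze the single iteration $i = i^{*}$ of Algorithm~\ref{alg:mrce} that corresponds to the size of an optimal MRCE solution, and to show that the tree produced there already attains the claimed ratio; since Greedy MRCE returns the best ratio over all $i \in [n]$, this suffices. Concretely, I would fix an optimal solution $S_{opt} \in Con(G,v_0)$ and set $i^{*} = |S_{opt}|$. Because $S_{opt}$ is a connected $v_0$-containing set of size $i^{*}$, it witnesses $OPT_{i^{*}} \ge |N[S_{opt}]|$, so $OPT_{i^{*}}/i^{*} \ge |N[S_{opt}]|/|S_{opt}| = OPT$. Thus it is enough to lower-bound $|N[T_{i^{*}}]|/|T_{i^{*}}|$ by $\tfrac16(1-\tfrac1e)\,OPT_{i^{*}}/i^{*}$.

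First I would bound the size of $T_{i^{*}}$. By Lemma~\ref{lemma:D_i} there is a connected subgraph $\hat D_{i^{*}}$ containing $v_0$, on at most $3i^{*}$ vertices, whose total profit meets the quota $(1-\tfrac1e)OPT_{i^{*}}$ passed to $2$-$RQST$ at line~$4$. Hence a spanning tree of $\hat D_{i^{*}}$ is a feasible solution of the RQST instance with cost at most $3i^{*}-1$ edges, so the optimal RQST cost is at most $3i^{*}-1$. Applying the $2$-approximation of Theorem~\ref{thm:RQST}, the returned tree $T_{i^{*}}$ has at most $2(3i^{*}-1) < 6i^{*}$ edges and therefore at most $6i^{*}$ vertices, giving $|T_{i^{*}}| \le 6i^{*}$.

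Next I would bound the numerator. Since $T_{i^{*}}$ satisfies the quota, $\sum_{v\in V(T_{i^{*}})} p(v) \ge (1-\tfrac1e)OPT_{i^{*}}$, and because each GDS profit $p(v)$ counts a distinct block of vertices that $v$ dominates, the total profit over the nodes of $T_{i^{*}}$ is a lower bound on the number of vertices they dominate; this yields $|N[T_{i^{*}}]| \ge (1-\tfrac1e)OPT_{i^{*}}$. Combining the two bounds,
$$\frac{|N[T_{i^{*}}]|}{|T_{i^{*}}|} \;\ge\; \frac{(1-\tfrac1e)OPT_{i^{*}}}{6\,i^{*}} \;=\; \frac16\Bigl(1-\frac1e\Bigr)\frac{OPT_{i^{*}}}{i^{*}} \;\ge\; \frac16\Bigl(1-\frac1e\Bigr)OPT,$$
and since $i^{*}\in[n]$ is one of the indices examined by the loop, the ratio returned by Greedy MRCE is at least this value.

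The step I expect to be the main obstacle is the numerator bound, and specifically reconciling the profit function with the closed neighbourhood in the original graph $G$. The profit $p$ is produced by running GDS on the augmented graph $G'$, where $v_0$ absorbs the $n^2$ auxiliary leaves into its profit; the delicate point is to argue that this gadget only serves to force $v_0$ into $D$ with top priority, so that $T_{i^{*}}$ is genuinely rooted at $v_0$, while the profit certifying the quota still corresponds to vertices dominated inside $G$. Care is therefore needed to separate the $n^2$-leaf contribution from the genuine coverage when passing from $G'$ back to $G$, and to verify that both the feasibility argument underlying Lemma~\ref{lemma:D_i} and the coverage lower bound survive this separation. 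By contrast, the size bound and the final arithmetic are routine once $|T_{i^{*}}| \le 6i^{*}$ and the coverage bound are established.
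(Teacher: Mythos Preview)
Your proposal follows exactly the paper's route: bound $|T_i|\le 6i$ via Lemma~\ref{lemma:D_i} together with Theorem~\ref{thm:RQST}, bound $|N[T_i]|\ge(1-\tfrac1e)OPT_i$ from the quota, and conclude by selecting the index maximizing $OPT_i/i$ (the paper writes $OPT=\max_{i}OPT_i/i$ and picks $i_0$ attaining the maximum, which is equivalent to your choice $i^{*}=|S_{opt}|$). The $G$-versus-$G'$ profit translation you flag as the main obstacle is handled in the paper with the same brevity you use, so there is no additional idea you are missing from the published proof.
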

	\begin{proof}
		For each $i \in [n]$,  by Lemma~\ref{lemma:D_i}, there exists a solution of at most $3i$ vertices with profit at least $(1-\frac{1}{e})OPT_i$.
		In Algorithm~\ref{alg:mrce}, we run $2$-$RQST$, therefore obtaining a, connected and including $v_0$, solution of at most $6i$ vertices with profit at least $(1-\frac{1}{e})OPT_i$. 
		Let $APX_i$ stand for the MRCE ratio of the approximate solution corresponding to $T_i$.
		Then $$APX_i \ge \frac{(1-\frac{1}{e})OPT_i}{6i} = \frac{1}{6}\left(1-\frac{1}{e}\right) \frac{OPT_i}{i}$$
		Now, let $OPT$ stand for the optimal ratio for MRCE.
		Then, $OPT = \max_{i\in[n]} \left\{\frac{OPT_i}{i}\right\}$.
		Let $i^*$ be the solution size returned by Algorithm~\ref{alg:mrce} and $i_0 = \argmax_{i\in[n]} \left\{\frac{OPT_i}{i}\right\}$.
		Then, $APX_{i^*} \ge APX_{i_0} \ge \frac{1}{6}\left(1-\frac{1}{e}\right) OPT$, which concludes the proof.
	\end{proof}
	
	\section{Interval Graphs}\label{sec:interval}
	In this section, we provide an optimal polynomial time algorithm for the special case of \emph{interval graphs}.
	We commence with some useful preliminaries and then provide the algorithm and its correctness.
	
	\subparagraph{Preliminaries. }
	
	All intervals considered in this section are defined on the real line, closed and non-trivial (i.e., not a single point).
	Their form is $[\alpha, \beta]$, where $\alpha < \beta$ and $\alpha, \beta \in \mathbb{R}$.
	
	\begin{definition}
		A graph is called interval if it is the intersection graph of a set of intervals on the real line.
	\end{definition}
	
	Following the above definition, each graph node corresponds to a specific interval and two nodes are connected with an edge if and only if their corresponding intervals overlap.
	
	\begin{definition}
		Given an interval graph $G$, a realization of $G$ (namely $I(G)$) is a set of intervals on the real line corresponding to $G$, where
		\begin{itemize}
			\item for each node $v \in V(G)$, the corresponding interval is given by $I(v) \in I(G)$, and
			\item for $v, u \in V(G)$, $I(v)$ intersects $I(u)$ if and only if $[v, u] \in E(G)$.
		\end{itemize}
	\end{definition}
	
	Notice that we can always derive a realization, where \emph{all interval ends are distinct}.
	Suppose that two intervals share a common end. 
	One need only extend one of them by $\epsilon > 0$ chosen small enough such that neighboring relationships are not altered.
	
	Below, we provide a definition caring for the relative position of two intervals with regards to each other.
	Building on that, we define a partition of $V(G)$ with respect to the position of the vertices' corresponding intervals apropos of the $v_0-$interval.
	
	\begin{definition}\label{def:intervals}
		Given two intervals $x = [x_l, x_r]$ and $y = [y_l, y_r]$ 
		, we denote the following:
		\begin{itemize}
			\item $x \sqsubset y$, i.e. $x$ is contained in $y$, when $x_l > y_l$ and $x_r < y_r$.
			\item $x \cap_L y$, i.e. $x$ intersects $y$ to the left, when $x_l < y_l$ and $y_l < x_r < y_r$.
			\item $x \cap_R y$, i.e. $x$ intersects $y$ to the right, when $x_r > y_r$ and $y_l < x_l < y_r$.
			\item $x \prec_L y$, i.e. $x$ is strictly to the left of $y$, when $x_r < y_l$.
			\item $x \succ_R y$, i.e. $x$ is strictly to the right of $y$, when $x_l > y_r$.
		\end{itemize}
	\end{definition}
	
	\begin{definition}
		We define the following sets:
		\begin{itemize}
			\item Let $C \coloneqq \{v \in V(G): I(v_0) \sqsubset I(v)\}$. 
			Notice that $v_0 \notin C$.
			\item Let $C' \coloneqq \{v \in V(G): I(v) \sqsubset I(v_0)\}$.
			Notice that $v_0 \notin C'$.
			\item Let $C_L \coloneqq \{v \in V(G): I(v) \cap_L I(v_0) \}$.
			\item Let $C_R \coloneqq \{v \in V(G): I(v) \cap_R I(v_0) \}$.
			\item Let $L \coloneqq \{v \in V(G): I(v) \prec_L I(v_0) \}$.
			\item Let $R \coloneqq \{v \in V(G): I(v) \succ_R I(v_0) \}$.
		\end{itemize}
	\end{definition}
	
	\begin{proposition}\label{prop:partition}
		$(L, C_L, C', C, \{v_0\}, C_R, R)$ forms a partition of $V(G)$.
	\end{proposition}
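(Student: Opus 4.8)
The plan is to reduce the claim to an elementary case analysis on the endpoints of the intervals, exploiting the observation recorded just above the proposition that we may work in a realization in which all interval ends are distinct. Write $I(v_0) = [a, b]$ with $a < b$, and for an arbitrary vertex $v \neq v_0$ write $I(v) = [v_l, v_r]$ with $v_l < v_r$. Because all four numbers $a, b, v_l, v_r$ are pairwise distinct, each of the two endpoints $v_l$ and $v_r$ lies in exactly one of the three open regions $(-\infty, a)$, $(a, b)$, $(b, +\infty)$ determined by $I(v_0)$, with no ambiguity arising from coincident endpoints.

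First I would note that, since $v_l < v_r$, the region containing $v_l$ cannot lie strictly to the right of the region containing $v_r$. Labelling the regions $1, 2, 3$ from left to right, this leaves exactly six admissible pairs $(\text{region of } v_l, \text{region of } v_r)$, namely $(1,1), (1,2), (1,3), (2,2), (2,3), (3,3)$. I would then check, by unwinding Definition~\ref{def:intervals}, that each pair corresponds to exactly one of the six sets $L, C_L, C, C', C_R, R$: $(1,1)$ gives $v_r < a$, i.e.\ $v \in L$; $(1,2)$ gives $v_l < a$ and $a < v_r < b$, i.e.\ $v \in C_L$; $(1,3)$ gives $v_l < a < b < v_r$, i.e.\ $I(v_0) \sqsubset I(v)$ and $v \in C$; $(2,2)$ gives $a < v_l$ and $v_r < b$, i.e.\ $I(v) \sqsubset I(v_0)$ and $v \in C'$; $(2,3)$ gives $a < v_l < b < v_r$, i.e.\ $v \in C_R$; and $(3,3)$ gives $v_l > b$, i.e.\ $v \in R$.

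Since the region-pair of a vertex is uniquely determined, the six sets are pairwise disjoint, and since the six pairs above exhaust all admissible configurations, every $v \neq v_0$ belongs to exactly one of them; this simultaneously yields disjointness and the identity $L \cup C_L \cup C' \cup C \cup C_R \cup R = V(G) \setminus \{v_0\}$. To finish I would observe that $v_0$ itself lies in none of the six sets, since each of the defining relations $\sqsubset, \cap_L, \cap_R, \prec_L, \succ_R$ is irreflexive in the distinct-endpoint realization, so $I(v_0)$ stands in none of them relative to itself; adjoining the singleton $\{v_0\}$ then gives the claimed partition of $V(G)$ into the seven parts. No step here is genuinely hard; the only point requiring care is the appeal to distinct endpoints, which is exactly what rules out the degenerate boundary configurations (such as $v_l = a$) that would otherwise make the six sets fail to be either exhaustive or disjoint.
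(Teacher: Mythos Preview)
Your proof is correct. The approach differs from the paper's in organisation rather than in substance: the paper first splits $V(G)$ as $N[v_0] \cup (V(G)\setminus N[v_0])$, identifies $N[v_0] = \{v_0\} \cup C \cup C' \cup C_L \cup C_R$ and $V(G)\setminus N[v_0] = L \cup R$, and then dismisses disjointness by citing Definition~\ref{def:intervals} with a single illustrative contradiction. You instead go straight to a six-way case split on the region-pair $(\text{region of } v_l,\ \text{region of } v_r)$, which simultaneously establishes exhaustiveness and disjointness without first passing through the neighborhood decomposition. Your route is slightly more explicit and makes transparent exactly where the distinct-endpoints assumption is used; the paper's version is terser but leaves more of the casework to the reader. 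Neither argument uses any idea the other does not.
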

	\begin{proof}
		To see the union, one needs to spot that $V(G) = (V(G)\setminus N[v_0])\cup N[v_0]$, where $N[v_0] = \{v_0\} \cup C \cup C' \cup C_L \cup C_R$ and $ V(G)\setminus N[v_0] = L \cup R$.
		Disjointness follows from Definition \ref{def:intervals}. 
		For instance, should $C_L \cap C_R = \{v\} \neq \emptyset$, then $I(v)_l < I(v_0)_l$ and $I(v)_l > I(v_0)_l$, a contradiction.
	\end{proof}
	
	Let us proceed with some useful propositions regarding the form of an optimal solution.
	
	\begin{proposition}\label{prop:C'}
		The addition of any node $v \in C'$ to any feasible $Interval \MRCE$ set does not increase the solution ratio.
	\end{proposition}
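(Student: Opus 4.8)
The plan is to show that adjoining a node $v \in C'$ leaves the numerator $|N[S]|$ of the expansion ratio unchanged while strictly enlarging the denominator, so the ratio can only decrease. Everything rests on a single structural fact about contained intervals, namely that $v \in C'$ forces $N[v] \subseteq N[v_0]$.

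First I would establish this inclusion. By definition of $C'$ we have $I(v) \sqsubset I(v_0)$, i.e.\ $I(v_0)_l < I(v)_l$ and $I(v)_r < I(v_0)_r$, so $I(v)$ lies entirely inside $I(v_0)$. Now take any $w \in N[v]$. If $w = v$, then $I(w) = I(v) \subseteq I(v_0)$ certainly overlaps $I(v_0)$; otherwise $w$ is a neighbor of $v$, so $I(w)$ meets $I(v)$, and since $I(v) \subseteq I(v_0)$ that same overlap point lies in $I(v_0)$, whence $I(w)$ meets $I(v_0)$. In either case $w \in N[v_0]$, giving $N[v] \subseteq N[v_0]$.

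Next I would invoke feasibility. Any feasible $Interval \MRCE$ set $S \in Con(G, v_0)$ contains $v_0$, so $N[v_0] \subseteq N[S]$, and combined with the previous step this yields $N[v] \subseteq N[S]$. Write $S' = S \cup \{v\}$ (if $v \in S$ there is nothing to prove). Then $N[S'] = N[S] \cup N[v] = N[S]$, so $|N[S']| = |N[S]|$, while $|S'| = |S| + 1$. Hence
$$\frac{|N[S']|}{|S'|} = \frac{|N[S]|}{|S|+1} \le \frac{|N[S]|}{|S|},$$
which is exactly the claim. As a side remark, $S'$ stays connected and rooted, since strict containment implies $I(v)$ meets $I(v_0)$, so $v$ is adjacent to $v_0 \in S$.

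I do not anticipate a genuine obstacle here: the argument is a short inclusion followed by a one-line comparison of fractions. The only point requiring care is reading the strict-containment definition of $\sqsubset$ correctly to conclude $I(v) \subseteq I(v_0)$, and therefore that every interval meeting $I(v)$ also meets $I(v_0)$; once that inclusion is in hand the numerator is pinned and the result is immediate.
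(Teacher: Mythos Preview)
Your proof is correct and follows essentially the same approach as the paper: both argue that $I(v) \sqsubset I(v_0)$ forces $N[v] \subseteq N[v_0] \subseteq N[S]$, so the numerator is unchanged while the denominator grows. Your version is simply more explicit in justifying the inclusion $N[v] \subseteq N[v_0]$ and in handling the trivial case $v \in S$.
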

	\begin{proof}
		Suppose we extend a feasible solution $S$ by forming another feasible solution $S' = S \cup \{v\}$, where $v \in C'$.
		Then, $N[S'] = N[S]$, since $v$ is a neighbor of $v_0$ and $v$ has, at the best case, the same neighbors as $v_0$.
		The new ratio becomes 
		$\frac{|N[S']|}{|S'|}  =
		\frac{|N[S]|}{|S|+1} < 
		\frac{|N[S]|}{|S|}$.
	\end{proof}
	
	Let us now show that we need only care about a specific subset of $C$, namely $C^*$, 
	defined as $C^* \coloneqq \{v \in C \mid \nexists~v' \in C: v \neq v' \land I(v) \sqsubset I(v')\}$.
	That is, we restrict ourselves to those vertices whose corresponding intervals contain $I(v_0)$, but are not contained in any other interval.
	In other words, we are only interested in the intervals that \emph{maximally} contain $I(v_0)$.
	
	\begin{proposition}\label{prop:C*}
		Any feasible $Interval \MRCE$ solution $S \subseteq V(G)$ containing a node $v \in C\setminus C^*$ can be transformed into another feasible solution $S'$, where $v \notin S'$, with at least the same ratio as $S$.
	\end{proposition}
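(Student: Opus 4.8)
The plan is to replace the non-maximal node $v$ by a node $v'$ whose interval contains $I(v)$, exploiting that interval containment forces neighborhood containment. First I would unfold the definition of $C \setminus C^*$: since $v \notin C^*$, there is some $v'' \in C$ with $v'' \neq v$ and $I(v) \sqsubset I(v'')$. The relation $\sqsubset$ restricted to $C$ is a strict partial order (irreflexive, antisymmetric and, since $I(a)\sqsubset I(b)\sqsubset I(c)$ forces $I(a)\sqsubset I(c)$, transitive), so in the finite set $C$ every element lies below a maximal one. A maximal element of $(C,\sqsubset)$ is exactly a node of $C^*$; hence I obtain $v' \in C^*$ with $I(v) \sqsubset I(v')$ and $v' \neq v$. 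I then set $S' = (S \setminus \{v\}) \cup \{v'\}$ as the transformed solution, which manifestly satisfies $v \notin S'$.

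The key structural fact I would establish is $N[v] \subseteq N[v']$. Writing $I(v) = [v_l, v_r]$ and $I(v') = [v'_l, v'_r]$, the relation $I(v) \sqsubset I(v')$ gives $v'_l < v_l$ and $v_r < v'_r$, so $[v_l, v_r] \subseteq [v'_l, v'_r]$. Consequently any interval meeting $I(v)$ also meets $I(v')$, which yields $N(v) \subseteq N(v')$; combined with the fact that $I(v)$ and $I(v')$ intersect (so that $v \in N[v']$), this gives $N[v] \subseteq N[v']$.

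With this in hand the ratio comparison is immediate and uniform across cases. From $N[S'] = N[S \setminus \{v\}] \cup N[v'] \supseteq N[S \setminus \{v\}] \cup N[v] = N[S]$ I get $|N[S']| \ge |N[S]|$, while $|S'| \le |S|$, with equality when $v' \notin S$ and a strict drop by one when $v' \in S$. A non-decreasing numerator over a non-increasing denominator then yields $|N[S']|/|S'| \ge |N[S]|/|S|$, as required.

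The step I expect to be the main obstacle is verifying that $S'$ is a feasible solution, namely that it is connected and contains $v_0$. Containment of $v_0$ is easy: $v_0 \notin C$ forces $v \neq v_0$, so removing $v$ leaves $v_0$ in $S'$. For connectivity I would use $N[v] \subseteq N[v']$ to reroute, within $G[S']$, any path of $G[S]$ that passes through $v$ so that it passes through $v'$ instead; every neighbor of $v$ lying in $S$ is a neighbor of $v'$, and $v_0 \in S'$ is itself adjacent to $v'$ (since $v' \in C$ means $I(v_0) \sqsubset I(v')$). Hence every node of $S'$ remains connected to $v_0$ in $G[S']$, establishing feasibility and completing the transformation.
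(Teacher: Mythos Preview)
Your proof is correct and follows essentially the same approach as the paper: both replace $v$ by a node $v'$ with $I(v) \sqsubset I(v')$, use the resulting inclusion $N[v] \subseteq N[v']$ to control the numerator, and compare sizes to bound the ratio. Your version differs only cosmetically---you pick $v'$ maximal in $C^*$ rather than an arbitrary witness in $C$, you merge the paper's two cases ($v' \in S$ versus $v' \notin S$) into a single uniform argument, and you make the feasibility check (connectivity and $v_0 \in S'$) explicit where the paper leaves it implicit.
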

	\begin{proof}
		Suppose we are given a feasible solution $S$ containing a node $v \in C \setminus C^*$.
		Then, by definition, there exists a node $v' \in C$ such that $v \neq v'$ and $I(v) \sqsubset I(v')$.
		Moreover, notice that $I(v) \sqsubset I(v')$ implies that $N[v] \subseteq N[v']$, since any interval intersecting $I(v)$ also intersects $I(v')$.
		We consider two cases.
		If $v' \in S$, then we form the feasible solution $S_1 = S \setminus \{v\}$.
		The new ratio is $
		\frac{|N[S_1]|}{|S_1|} =
		\frac{|N[S]|}{|S|-1} >
		\frac{|N[S]|}{|S|}
		$, since $|S_1| = |S| - 1$ and $N[S_1] = N[S]$ given that $v$ is a neighbor of $v_0$ and its neighbors are also covered by $v'$.
		Otherwise, if $v' \notin S$, we form the feasible solution $S_2 = (S \setminus\{v\})\cup v'$.
		The new ratio is $
		\frac{|N[S_2]|}{|S_2|}  \geq
		\frac{|N[S]|}{|S|}
		$, since $|S_2| = |S|$ and $|N[S_2]| \geq |N[S]|$ given that $N[v] \subseteq N[v']$.
	\end{proof}
	
	\subparagraph{The Algorithm. }
	The general idea of the algorithm is to start from the feasible solution $\{v_0\}$ and then consider a family of the best out of all possible expansions, while maintaining feasibility, either moving toward the left or the right in terms of the real line.
	The key in this approach is that the left and right part of the graph are dealt with \emph{independently} from each other.
	Of course, special care needs to be taken when other intervals contain $I(v_0)$.
	During this left/right subroutine, we save a series of possible expansion stop-nodes with maximal ratio.
	In the end, we conflate each left ratio with each right ratio and pick the combination providing the maximum one.
	The algorithm is given in Algorithm \ref{alg:interval} and the other routines follow in Algorithms~\ref{alg:expand}, \ref{alg:combine}.
	We hereby provide a short description for each function.
	\begin{itemize}
		
		\item \emph{Interval}: This is the main routine. The input is an interval graph $G$ and a starting node $v_0 \in V(G)$.
		The output is a solution set together with its corresponding ratio.
		Initially, the algorithm computes a realization $I(G)$, a partition of $V(G)$ and the \emph{core} set $C^*$ as defined in the preliminaries.
		Then, possible left and right expansions to $\{v_0\}$ are sought.
		These are combined to get a best solution for this case.
		Finally, these basic steps are repeated for each $c \in C^*$ and the best are kept in the \emph{Sols} pool.
		It then suffices to calculate the max out of the best candidate solutions.
		
		\item \emph{Expand}: This function is responsible for providing a set of possible expansions either left or right of a starting node.
		A direction, the starting node, the realization, the node partition and a counter are given as input.
		The counter serves to save different solutions in a vector, which is returned as output.
		Notice that the solution vector is \emph{static}, i.e. it can be accessed by any recursive call.
		The main step of the function is to select a node whose interval intersects the starting interval to the requested direction.
		At the same time, this interval needs to be the farthest away in this direction, i.e., its left/right endpoint needs to be smaller/greater to any other candidate's.
		The potential expansion is saved and the function is called recursively with the new node as a start point.
		The process continues till no further expansion can be made, i.e., the farthest interval  is reached.
		The returned vector does contain a no-expansion solution (case $count = 0$).
		
		\item \emph{Combine}: This function takes as input the potential left and right expansions.
		It then computes a ratio for each possible combination of left and right expansions and outputs the solution and ratio pair attaining the maximum ratio for the given starting node-set.
		
		\item \emph{MaxRatio}: This routine simply returns the maximum set-ratio pair out of a set of different such pairs.
		\item \emph{Ratio}: Simply returns the MRCE ratio for a given set.
		
	\end{itemize}
	
	
	\begin{figure}
		\scalebox{0.8475}{
			\begin{algorithm}[H]
				\SetKwInOut{Input}{Input}
				\SetKwInOut{Output}{Output}
				\DontPrintSemicolon
				
				\Input{An interval graph plus node pair $(G, v_0)$}
				\Output{A set-ratio pair $(S, s)$}
				
				$I \longleftarrow Realization(G)$\;
				$P \longleftarrow Partition(G, I)$\;
				$C^* \longleftarrow Core(C, I)$\;
				$L_{sols} \longleftarrow Expand(L, v_0, I, P, 0)$\;
				$R_{sols} \longleftarrow Expand(R, v_0, I, P, 0)$\;
				$Sols \longleftarrow Combine(\{v_0\}, L_{sols}, R_{sols}, G)$\;
				\ForEach{$c \in C^*$}{
					$L_{sols} \longleftarrow Expand(L, c, I, P, 0)$\;
					$R_{sols} \longleftarrow Expand(R, c, I, P, 0)$\;
					$Sols \longleftarrow Sols \cup \{Combine(\{v_0, c\}, L_{sols}, R_{sols}, G)\}$\;
				} 
				\textbf{return} $MaxRatio(Sols)$\;
				\caption{Interval}
				\label{alg:interval}
			\end{algorithm}
		}
	\end{figure}

	\begin{figure}
		\scalebox{0.88}{
			\begin{algorithm}[H]
				\SetKwInOut{Input}{Input}
				\SetKwInOut{Output}{Output}
				\DontPrintSemicolon
				
				\Input{A direction, node, realization, partition and counter $(D, v, I, P, count)$}
				\Output{A vector of sets of nodes $Sols$}
				
				\If{$count == 0$}{
					$Sols(count) \longleftarrow \{v\}$\;
				}
				Pick $v'$ such that $I(v')$ is the farthest interval on direction $D$ with $I(v') \cap_D I(v)$\;
				\uIf{$\nexists$ such a $v'$}{
					\textbf{return} $Sols$\;
				}
				\Else{
					$Sols(count+1) \longleftarrow Sols(count) \cup \{v'\}$\;
					\textbf{return} $Expand(D, v', I, P, count+1)$\;
				}
				
				\caption{Expand}
				\label{alg:expand}
			\end{algorithm}
		}
	\end{figure}
	
	\begin{figure}
		\scalebox{0.88}{
			\begin{algorithm}[H]
				\SetKwInOut{Input}{Input}
				\SetKwInOut{Output}{Output}
				\DontPrintSemicolon
				
				\Input{A node-set, left/right possible solutions and graph $(S, Left, Right, G)$}
				\Output{A set-ratio pair $(Argmax, Max)$}
				
				$(Argmax, Max) \longleftarrow (S, Ratio(S))$\;
				\ForEach{$l \in Left$}{
					\ForEach{$r \in Right$}{
						\If{$Ratio(S \cup l \cup r) > Max$}{
							$(Argmax, Max) \longleftarrow (S \cup l \cup r, Ratio(S \cup l \cup r))$\;
						}
					}
				}
				\textbf{return} $(Argmax, Max)$\;
				
				\caption{Combine}
				\label{alg:combine}
			\end{algorithm}
		}
	\end{figure}	
	
	
	\subparagraph{Correctness \& Complexity. }
	Lemma~\ref{lem:expand} argues about the fact that the solutions $Expand()$ ignores do not have any effect on optimality.
	We state the lemma for the \emph{left} expansion case and the reader can similarly adapt it to the right expansion case.
	Then, we conclude with the optimality and running time of the overall procedure (Theorem~\ref{thm:interval}).
	
	\begin{lemma}\label{lem:expand}
		Let $L_{sols}$ stand for the vector returned by the function call $Expand(L, v, I, P, 0)$ for some node $v \in V(G)$.
		For any node-set $S \subseteq C_L \cup L \cup \{v\}$ such that $v \in S$ and $S \notin L_{sols}$, there exists a set $S' \in L_{sols}$ such that $Ratio(S') \ge Ratio(S)$.
	\end{lemma}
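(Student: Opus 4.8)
The plan is to use the fact that the vector $L_{sols}$ returned by $Expand(L, v, I, P, 0)$ is exactly the chain of nested prefixes $P_0 \subset P_1 \subset \cdots \subset P_m$, where $P_0 = \{v\}$ and $P_{j+1} = P_j \cup \{a_{j+1}\}$, with $a_{j+1}$ the interval reaching farthest to the left among those intersecting $I(a_j)$ to the left (writing $a_0 = v$). Since the sets $S$ that the algorithm must not miss are those that, together with $v$, form a connected left-part of a feasible (hence connected) MRCE solution, I would first reduce to the case where the induced subgraph on $S$ is connected: a left-part that is disconnected from $v$ cannot be completed into a connected overall solution on this side, so it is precisely the connected case that is relevant.

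The first key step is a structural observation. Because each $a_{j+1}$ satisfies $I(a_{j+1}) \cap_L I(a_j)$, consecutive chain intervals overlap, and every $a_i$ has right endpoint strictly below $I(v)_r$ (using that members of $C_L$ and $L$ lie to the left of $I(v_0)$, and that $I(v)_r \ge I(v_0)_r$ both when $v = v_0$ and when $v = c \in C^*$, where $I(v_0) \sqsubset I(c)$). Hence the union $\bigcup_{i \le j} I(a_i)$ is a single contiguous segment $[\ell_j, I(v)_r]$ with $\ell_j := I(a_j)_l$, and since a contiguous segment is met by an interval iff that interval meets one of its pieces, $N[P_j] = \{w : I(w) \cap [\ell_j, I(v)_r] \neq \emptyset\}$. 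Thus $|N[P_j]|$ depends only on the leftmost reach $\ell_j$ and is nondecreasing in $j$. The same contiguity argument applied to any connected $S \ni v$ inside $\{v\} \cup C_L \cup L$ gives $N[S] = \{w : I(w) \cap [\ell(S), I(v)_r] \neq \emptyset\}$ with $\ell(S) := \min_{u \in S} I(u)_l$, so on this side coverage is governed entirely by leftmost reach.

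The heart of the proof is an exchange argument showing that the greedy chain reaches as far left as possible per node used, namely $\ell(S) \ge \ell_{\min(s-1,m)}$ for every connected $S \ni v$ with $|S| = s$. I would prove this by induction on $s$: the leftward neighbor of $v$ in $S$ cannot have left endpoint below $\ell_1$ by the greedy choice of $a_1$, and deleting $v$ leaves a connected left-part rooted at the interval of $S$ adjacent to $v$, to which the inductive hypothesis and monotonicity of $\ell_j$ apply. The main obstacle is keeping this clean when $S$ branches or carries several intervals at the ``same level'' instead of a single chain; the resolution is that such extra intervals raise $|S|$ without pushing $\ell(S)$ beyond the greedy frontier, so they can only lower the ratio.

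Finally I would assemble the pieces. Set $j = \min(s-1, m)$ and take $S' = P_j \in L_{sols}$. The exchange step gives $\ell(S) \ge \ell_j$, so $[\ell(S), I(v)_r] \subseteq [\ell_j, I(v)_r]$, whence $N[S] \subseteq N[P_j]$ and $|N[S]| \le |N[P_j]|$. If $s - 1 \le m$ then $|S'| = s = |S|$, so equal size with an at-least-as-large neighborhood yields $Ratio(S') \ge Ratio(S)$; if $s - 1 > m$ then $|S'| = m+1 \le s$ together with $N[S'] \supseteq N[S]$ again gives $Ratio(S') \ge Ratio(S)$. Either way the required $S' \in L_{sols}$ exists, and the corresponding statement for the right expansion follows by reflecting the real line.
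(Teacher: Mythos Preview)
Your proposal is correct and rests on the same exchange idea as the paper: the greedy chain reaches at least as far left per node used as any competing left-set, so some prefix $P_j\in L_{sols}$ matches or beats $S$ in ratio. The paper organizes this as an iterative node-by-node replacement (find the first index where $S$ diverges from the greedy chain, swap in the greedy node since it has a superset of left neighbors, drop any nodes now contained in the new interval, and recurse), whereas you abstract the whole comparison into the scalar invariant $\ell(\cdot)=$ leftmost reach, prove $N[\cdot]$ depends only on that scalar for connected left-sets, and then run a single induction showing $\ell(S)\ge \ell_{\min(|S|-1,m)}$. Your organization is cleaner and makes the ``branching only hurts'' observation explicit, while the paper's version stays closer to the algorithm's mechanics; both arrive at the same conclusion for the same reason.

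One point worth flagging: you reduce upfront to connected $S$, arguing that disconnected left-parts cannot participate in a feasible MRCE solution. The paper's proof does not state this reduction but implicitly relies on it as well (its replacement step uses that $v'_i$ intersects $v'_{i-1}$ to the left). Since the lemma is only invoked inside Theorem~\ref{thm:interval} on left-parts of connected solutions, this is harmless in context, but strictly speaking the lemma as written quantifies over all $S\subseteq C_L\cup L\cup\{v\}$; your explicit note about the reduction is an improvement in precision over the paper.
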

	\begin{proof}
		Let $v = v_1, v_2, \ldots, v_k$ be the set of nodes picked in the recursive calls of $Expand()$ (in decreasing order of their right endpoint).
		Let $v = v'_1, v'_2 \ldots, v'_{k'}$ be the set of nodes in $S$ (again in decreasing order of their right endpoint).
		Since $S \notin L_{sols}$, there exists a node $v'_i \in S$ such that $v'_i \neq v_i$, i.e. a point where $S$ and $S'$ "diverge". 
		Then, we can replace $v'_i$ by $v_i$, since due to the choice of $v'_i$ in line $4$ of Algorithm~\ref{alg:expand} it holds $N_L(v'_i) \subseteq N_L(v_i)$, where $N_L(v)$ stands for the \emph{left} neighbors of $v$ (i.e. the neighbors whose corresponding intervals intersect $v$ to the left).
		Hence, after this replacement, the ratio of the set does not decrease due to the (possibly) increased size of the left neighborhood.
		Afterward, one can ignore all nodes $v'_j$ (where $j > i$) such that $I(v'_j) \sqsubset I(v_i)$ and repeat the same argument with $v_i$ as a starting point and so forth.
	\end{proof}
	
	\begin{theorem}\label{thm:interval}
		$Interval(G, v_0)$ optimally solves $Interval$ $\MRCE$ in $\mathcal{O}(n^3)$ time.
	\end{theorem}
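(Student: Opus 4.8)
The plan is to prove correctness and the running-time bound separately. For correctness I would first constrain the shape of an optimal solution $S^\star$: by Proposition~\ref{prop:C'} I may assume $S^\star\cap C'=\emptyset$, and by Proposition~\ref{prop:C*} I may assume $S^\star$ uses no node of $C\setminus C^*$, so that $S^\star\subseteq\{v_0\}\cup C^*\cup C_L\cup C_R\cup L\cup R$. The key structural observation I would then establish is that, because $G$ is an interval graph and $S^\star$ is connected, the union $\bigcup_{v\in S^\star}I(v)$ is a single contiguous interval $[a,b]\supseteq I(v_0)$, and a vertex lies in $N[S^\star]$ exactly when its interval meets $[a,b]$. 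Hence $|N[S^\star]|$ is a monotone function of the \emph{reach} $[a,b]$ only, so for a fixed reach the ratio is maximised by covering $[a,b]$ with as few intervals as possible; moreover the leftward reach $a$ and the rightward reach $b$ are realised by two chains that interact only through the pivot (the node $v_0$, or a straddling interval of $C^*$), which is precisely the left/right independence the algorithm exploits.

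With this decomposition, I would argue that the greedy-farthest chains enumerated by \emph{Expand} capture every reach that matters. The exchange argument of Lemma~\ref{lem:expand} can be applied \emph{independently} to the left and right portions of $S^\star$: replacing a left node by the greedy, farther-reaching node only enlarges the left neighbourhood and leaves the right neighbourhood untouched (and symmetrically), because once the pivot is fixed the closed neighbourhood splits into disjoint left and right contributions. Thus $S^\star$ transforms into some $\{v_0\}\cup l\cup r$ (or $\{v_0,c\}\cup l\cup r$) with $l\in L_{sols}$, $r\in R_{sols}$ and no smaller ratio, and since \emph{Combine} evaluates every such pairing it recovers the optimum once the correct pivot is chosen. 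It remains to show the pivots tried by \emph{Interval} suffice. If $S^\star$ contains no $C^*$ node, the seed $\{v_0\}$ works. If it contains a $C^*$ node, a single one suffices: given $c_1,c_2\in C^*$ (neither contained in the other, so say $c_1$ reaches farther left and $c_2$ farther right), seeding $\{v_0,c_1\}$ and expanding rightward from $c_1$ reaches at least as far as $c_2$ in its first step, since $c_2$ overlaps $c_1$; iterating, one pivot plus the greedy chains dominates any solution using several straddling intervals. Because every interval meeting $v_0$ to the left also meets any $C^*$ interval containing $I(v_0)$, expanding from $c$ rather than $v_0$ never loses reach, which justifies expanding from each $c\in C^*$. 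Combined with the two propositions, this shows the maximum ratio stored in $Sols$ equals $\mathrm{MRCE}(G,v_0)$.

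For the running time, a realisation with distinct endpoints, the partition of Proposition~\ref{prop:partition}, and the core $C^*$ are computed in $\mathcal{O}(n^2)$ time. Each call to \emph{Expand} performs at most $n$ recursive steps, each scanning $\mathcal{O}(n)$ candidates for the farthest interval, costing $\mathcal{O}(n^2)$ and returning a vector of at most $n+1$ sets. For \emph{Combine} I would precompute, from the sorted endpoints and prefix sums, the value $|N[\cdot]|$ as a function of the reach $[a,b]$ (the number of intervals meeting $[a,b]$ equals $n$ minus those ending before $a$ minus those starting after $b$); then each of the $\mathcal{O}(n^2)$ left/right pairings is evaluated in $\mathcal{O}(1)$, giving $\mathcal{O}(n^2)$ per \emph{Combine}. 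The main loop ranges over the at most $n$ elements of $C^*$, each iteration costing $\mathcal{O}(n^2)$ for its two \emph{Expand} calls and one \emph{Combine}, for a total of $\mathcal{O}(n^3)$, with the final \emph{MaxRatio} over $\mathcal{O}(n^2)$ candidates subsumed.

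The step I expect to be the main obstacle is the reduction to a single $C^*$ pivot together with the precise justification of left/right independence in the presence of straddling intervals: one must verify that driving the greedy chains from the widest available interval never sacrifices reach while keeping the vertex count under control, so that all ratio inequalities point the right way. The complexity bound in turn hinges on the $\mathcal{O}(1)$ reach-to-neighbourhood evaluation; without the prefix-sum preprocessing, each ratio would cost a linear factor and inflate the bound to $\mathcal{O}(n^4)$.
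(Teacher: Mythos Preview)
Your proposal follows essentially the same route as the paper: left/right independence from a pivot in $\{v_0\}\cup C^*$, Lemma~\ref{lem:expand} to justify the greedy chains, and exhaustive pairing in \emph{Combine}. You are in fact more careful than the paper on two points it leaves implicit---the reduction to a \emph{single} $C^*$ pivot and the $\mathcal{O}(1)$ reach-to-neighbourhood evaluation via prefix sums---the latter being necessary for the claimed $\mathcal{O}(n^3)$ bound, which the paper asserts without explaining how \emph{Ratio} avoids a linear cost per call.
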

	\begin{proof}
		For each node $v \in \{v_0\}\,\cup\, C^*$ that we choose as a starting point for the $Expand()$ function,
		we see that, when expanding with $v'$ such that $I(v') \cap_L I(v)$, $v'$ does not have any right-neighbors not already in $N_R(v)$.
		Equivalently, if we expand to the right, there is no effect on the left neighborhood of the starting node.
		Indeed, only intervals containing $v$ could harm this notion of left/right neighborhood \emph{independence} and these are not considered by $Expand()$.
		So, we can independently expand leftward and rightward and get a series of connected subsets in both directions.
		Then, $Combine()$ ensures we select the best left and right expansion in ratio terms by looking at all possible combinations. 
		Such a solution is actually a potential optimal: any subset ignored by $Expand()$ would yield a worse ratio (Lemma \ref{lem:expand}). Eventually, the maximum ratio amongst all possible starting points is returned.
		This is an overall optimal, since it outperforms all other potential optimals and we have considered all possible maximal intervals containing $v_0$, i.e., the set $C^*$, as part of the solution.
		
		$Realization()$ and $Partition()$ take linear time, while $Core()$ may take $ \mathcal{O}(n^2)$ time.
		The loop iterating the elements of $C^*$ in \emph{Interval()} dominates the time complexity.
		In the worst-case, $\mathcal{O}(n)$ steps for $Expand()$ and $\mathcal{O}(n^2)$ steps for $Combine()$ are repeated for $\mathcal{O}(n)$ elements of $C^*$.
		Thence, the worst-case time complexity is $\mathcal{O}(n^3)$.
	\end{proof}

\section{Conclusion \& Further Work}\label{sec:conclusion}

We proved that MRCE is NP-complete for split graphs. We showed that, in this case, the problem admits a polynomial time approximation scheme, whereas for interval graphs we proposed a polynomial-time algorithm. For general graphs, we also gave a constant-factor approximation algorithm by exploring the relation of MRCE with BCDS \cite{BCDS}.

The major open question is to improve the approximability of the problem on general graphs without applying BCDS techniques, but using rather MRCE properties. Another open problem is the design of an approximation algorithm for chordal graphs. Towards this direction, we notice that even for chordal graphs with a dominating clique (a superclass of split graphs), equivalently chordal graphs with diameter at most three (Theorem 2.1 \cite{Kratsch}), the assumption that only clique nodes need to be included in a solution (Lemma~\ref{lem:ind}) now fails.

\newpage %




\end{document}